  \pgfplotsset{compat=1.13}
\spnewtheorem{thm}{Theorem}[section]{\bf}{\it}
\spnewtheorem{prop}[thm]{Proposition}{\bf}{\it}
\spnewtheorem{lem}[thm]{Lemma}{\bf}{\it}
\spnewtheorem{cor}[thm]{Corollary}{\bf}{\it}
\spnewtheorem{cond}[thm]{Condition}{\bf}{}
\spnewtheorem{exm}[thm]{Example}{\bf}{}
\spnewtheorem{rem}[thm]{Remark}{\bf}{}
\renewcommand{\phi}{\varphi}
\newcommand{\eps}{\varepsilon}
\newcommand{\ud}{\mathrm{d}}
\newcommand{\uod}{{\mathrm{od}}}
\newcommand{\R}{\mathbb{R}}
\newcommand{\N}{\mathbb{N}}
\newcommand{\Lz}{L^2}
\newcommand{\dom}{\mathfrak{D}}
\newcommand{\norm}[1]{\ensuremath{\left\Vert #1 \right\Vert}}
\newcommand{\abs}[1]{\ensuremath{\left\vert #1 \right\vert}}
\newcommand{\hilb}{\mathcal{H}}
\newcommand{\uppar}[1]{\ensuremath{^{(#1)}}}
\begin{document}

\title*{The Massless Nelson Hamiltonian and its Domain}
%\subtitle{\text{Contrib}}
% Use 
%\titlerunning{Short Title} 
%for an abbreviated version of
% your contribution title if the original one is too long
\author{Julian Schmidt}
% Use \authorrunning{Short Title} for an abbreviated version of
% your contribution title if the original one is too long
\institute{Julian Schmidt \at Fachbereich Mathematik, Eberhard Karls Universit\"at T\"ubingen, Auf der Morgenstelle 10, 72076 T\"ubingen, Germany. \email{juls@maphy.uni-tuebingen.de}}

%
% Use the package "url.sty" to avoid
% problems with special characters
% used in your e-mail or web address
%
\maketitle

\abstract*{In the theory of point interactions, one is given a formal expression for a quantum mechanical Hamiltonian. The interaction terms of the Hamiltonian are singular: they can not be rigorously defined as a perturbation (in the operator or form sense) of an unperturbed free operator. A similar situation occurs in Quantum Field Theory, where it is known as the ultraviolet problem. Recently, it was shown that some of the tools used in the context of point interactions can be adapted to solve the problem of directly defining a Hamiltonian for the Nelson model. This model provides a well studied example of a bosonic quantum field that is linearly coupled to nonrelativistic particles. The novel method employs so called abstract interior-boundary conditions to explicitly characterise the action and the domain of the Hamiltonian without the need for a renormalisation procedure. Here, for the first time, the method of interior-boundary conditions is applied to the massless Nelson model. Neither ultraviolet nor infrared cutoffs are needed.}

\abstract{In the theory of point interactions, one is given a formal expression for a quantum mechanical Hamiltonian. The interaction terms of the Hamiltonian are singular: they can not be rigorously defined as a perturbation (in the operator or form sense) of an unperturbed free operator. A similar situation occurs in Quantum Field Theory, where it is known as the ultraviolet problem. Recently, it was shown that some of the tools used in the context of point interactions can be adapted to solve the problem of directly defining a Hamiltonian for the Nelson model. This model provides a well studied example of a bosonic quantum field that is linearly coupled to nonrelativistic particles. The novel method employs so called abstract interior-boundary conditions to explicitly characterise the action and the domain of the Hamiltonian without the need for a renormalisation procedure. Here, for the first time, the method of interior-boundary conditions is applied to the massless Nelson model. Neither ultraviolet nor infrared cutoffs are needed.}

\section{Introduction}
\label{sec:1}
In this contribution we will discuss how some of the tools that have been developed in the theory of (many body-)point interactions can be adapted to define Hamiltonians for certain models of Quantum Field Theory. In these models, a nonrelativistic particle interacts linearly with a bosonic quantum field, which means that the interaction term in a formal Hamiltonian is linear in creation and annihilation operators. If one wants to set up a self-adjoint Hamiltonian for such a model, the main obstacle is the fact that this interaction term is in general not small -- in the operator or form sense -- relative to the free operator $L$, i.e. the Hamiltonian for the non-interacting system of particles and field. Because the relative bound is given by an integral in Fourier space, which does or does not converge for large momenta, this is also called the \textit{ultraviolet problem}. Well studied examples with linear coupling are the so called massive and massless Nelson models. Until recently, the standard approach to overcome the ultraviolet problem was a renormalisation procedure, where the interaction is restricted by hand to momenta $\abs k \leq \Lambda$ for some positive $\Lambda$ in order to render the bound finite. This \textit{UV-cutoff} results in a self-adjoint cutoff Hamiltonian $H_\Lambda$. In some models, including the massive and the massless Nelson model, there exists a diverging sequence of so called renormalisation constants $E_\Lambda$ such that $H_\Lambda+E_\Lambda$ converges for $\Lambda \rightarrow \infty$ in norm resolvent sense to a self-adjoint operator $H_\infty$. This is called \textit{removing the UV-cutoff} and the operator $H_\infty$ is called the renormalised Hamiltonian. While the renormalisation method yields that the so obtained operator is bounded from below, neither the action of $H_\infty$ nor its domain $D(H_\infty)$ are obtained in this way. That is why, at the end of his seminal article of 1964, after carrying out the renormalisation procedure sketched above, Edward Nelson posed the following questions:
\begin{quotation}
It would be interesting to have a direct description
of the operator $H_\infty$. \\ Is $D(H_\infty) \cap D(L^{1/2}) = \{0 \}$? \ (\cite{nelson1964})
\end{quotation}
In the article \cite{GrWu17}, Griesemer and Wünsch finally gave the answer to the second question: Yes, in fact it even holds form the form domain that $D(\abs{H_\infty}^{1/2}) \cap D(L^{1/2}) = \{0 \}$. This was proved with the help of the renormalisation technique. While their result solved the second part of the problem posed by Nelson, it also showed the limitations of this method, for it required considerable technical effort to extract this information.

In the recent article \cite{nelsontype}, Jonas Lampart together with the author gave a complete answer to Nelson's question in the above quote. That is, to provide a direct desription of the operator $H_\infty$ and its domain, from which the answer to the second question can easily be read off. More concretely, a dense domain $D(H)$ on Fock space is constructed, whose elements are the sum of a regular part, which is an element of $D(L)$, and a singular part. Then the action of $L$ is extended to this domain in such a way that it encodes the action of the creation operator. In addition, also the action of the annihilation operator is extended to the domain $D(H)$ and it is shown that their sum defines a self-adjoint operator $H$, bounded from below. Afterwards it turns out, that this operator is in fact the limit of the sequence of cutoff operators $H_\Lambda$, so it becomes clear that $H$ is equal to the renormalised Hamiltonian $H_\infty$. 

Characterising elements of $D(H)$ in this way can be viewed as imposing abstract boundary conditions on them. These boundary conditions, which are called \textit{interior-boundary conditions}, are formulated in strong analogy with the theory of point interactions. The main difference being the fact that the boundary space or space of charges of the theory of point interactions is on each sector of Fock space identified with the sector with one boson less. In this way the boundary space can be identified with the Hilbert space $\hilb$ itself. The singular behaviour of the wave function on one sector is determined by the wave function one sector below. The Skornyakov--Ter-Martyrosyan (STM) operator appears in this construction not as part of a boundary condition and it is therefore not used to label self-adjoint realisations, for the latter alternative see, e.g. \cite{MiOt17}. Instead, the STM operator $T$ is identified as the correct extension of the annihilation operator to the singular functions and is therefore part of the action of the Hamiltonian. Thus it is not necessary to study $T$ as an operator on the space of charges, but as an operator on $\hilb$.

In Nelsons original work~\cite{nelson1964}, the so called massive case was treated, where the dispersion relation of the bosonic field is given by $(\abs{k}^2+m^2)^{1/2}$ for some $m>0$. Later, the renormalisation procedure was applied also to the massless case $m=0$ and the properties of the Hamiltonians with and without cutoff were investigated, see e.g. \cite{froehlich1974,{pizzo03},BDP12,MaMo17}. The result of Griesemer and Wünsch equally holds for the massless case.

In \cite{nelsontype}, the case of nonrelativistic particles was considered. In \cite{pseudorel18}, the construction was extended to treat also pseudorelativistic models with dispersion relations $\Theta(p) = \sqrt{p^2+\mu^2}$. If the renormalisation constant $E_\Lambda$ diverges too fast, the method of \cite{nelsontype} has to be suitably modified. This was done for the first time in \cite{La18}. In~\cite{DeltaPolaron}, the enhanced method of the former article is applied to a Polaron-type model. 

So far however, these results on interior-boundary conditions were concerned with the massive case: it was always assumed that the dispersion of the bosons is bounded from below by a positive constant. As a consequence, the free operator is bounded from below by the number (of bosons) operator, i.e. $N \leq L$. Now naturally the question arises whether the construction using abstract interior boundary conditions can be extended also to the massless case. After all, within renormalisation schemes, there is no difficulty in treating these cases as well. 

In the present note, we will give a more detailed description of the domain $D(H)$ with or without mass. Roughly speaking, we will differentiate Nelson's second question between the full free operator $L$ and the part of it that only acts on the field degrees of freedom, $\ud \Gamma(\omega)$. In this way, we will prove self-adjointness of the Hamiltonian $H$ with or without mass. Neither an ultraviolet nor an infrared cutoff is used in the construction, not even in an intermediate step. We will focus on a class of models in three space dimensions where one nonrelativistic particle interacts with the bosonic field.

In \cite{LiNi18}, interior boundary conditions were used in a multi-time formulation for massless Dirac particles in one space dimension. There the number of particles is bounded. As we will explain in more detail below, the main problems with massless fields occur only if an arbitrary number of quanta is allowed.

For physical aspects and more general discussions of the IBC approach, we refer the reader to \cite{{KeSi16},{traj18},{timeasymmetry}} and \cite{{TeTu16}}.

\section{The Model}
\label{sec:model}
In this section we will define the basic objects of our model. Then we will introduce a spectral parameter and justify its use by demonstrating that the domain and the extended annihilation operator are actually parameter-independent. 

Our model will be defined on the Hilbert space
\begin{align*}
\hilb: = \bigoplus_{n=0}^\infty \Lz(\R^3) \otimes \Lz_{\mathrm{sym}}(\R^{3n})
\end{align*}
of the composite system of the particle and the field. We will formulate the model in Fourier representation where elements of the sectors of this Hilbert space are wavefunctions 
\begin{align*}
\psi \uppar n(p,k_1, \dots, k_n) \, ,
\end{align*}
which are symmetric under exchange of either two of the $k$-variables. The operator that governs the dynamics of the nonrelativistic particle is given by the multiplication operator $p^2$. The dispersion relation of the field is given by a non-negative function $\omega \in L^\infty_{\mathrm{loc}}(\R^3)$. Its second quantisation will be denoted by $\Omega:=\ud \Gamma(\omega)$. We can now define the free operator $L=p^2+\Omega$, which is self-adjoint and non-negative with domain $D(L) \subset \hilb$. Since $\Omega \geq 0$, the operator $\Omega_\mu := \Omega+\mu$ is invertible for any $\mu >0$ and so is $L_\mu := p^2+ \Omega_\mu$. 

The interaction between the field and the particle is characterised by a coupling function $v \in \Lz_{\mathrm{loc}}(\R^3)$, which is called the \textit{form factor}. The formal expression for a Hamiltonian of the model is
\begin{align*}
L + a(V) + a^*(V) \, ,
\end{align*}
where the annihilation operator $a(V)$ acts sector-wise as
\begin{align*}
(a(V) \psi )\uppar n(p, k_1, \dots, k_n) := \sqrt{n+1} \int_{\R^3} \overline{v(k)} \psi \uppar {n+1}(p-k,k_1, \dots, k_n , k) \, \ud k \, .
\end{align*}
The creation operator $a^*(V)$ is the formal adjoint of $a(V)$, with action given by
 \begin{align*}
(a^*(V) \psi )\uppar n(p, k_1, \dots, k_n) := n^{-1/2} \sum_{j=1}^n v(k_j) \psi \uppar {n-1}(p+k_j,k_1, \dots, \hat{k}_j, \dots, k_n)\, .
\end{align*}
As usual, $\hat{k}_j$ means that the $j$-th variable is omitted. The operator $a^*(V)$ is a densely defined operator on $\hilb$ if and only if $v \in \Lz(\R^d)$. However, in all relevant examples, this is not the case. Often $v$ is in $\Lz_{\mathrm{loc}}(\R^d)$ but is not decaying fast enough at infinity such that $v \notin \Lz$. This is what we will assume in the following. 

If we wanted to start with a renormalisation procedure, we would now simply replace $v$ by $\chi_\Lambda v $ where $\chi_\Lambda$ is the characteristic function of a ball of radius $\Lambda$ in $\R^3$. Instead, we proceed by defining an operator $G_\mu^* := - a(V) L^{-1}_\mu$. Later, we will make assumptions on $v$ which guarantee that this operator is bounded. As a consequence, the symmetric operator $L_{0,\mu} := L_\mu \big \vert_{\ker a(V)}$ is closed for any $\mu \geq 0$. Because $v \notin \Lz$, its domain $\ker a(V)$ is also dense in $\hilb$, see \cite[Lem.~2.2]{nelsontype}. Therefore the adjoint $L_{0,\mu}^*$ is unique. Observe that the operator $G_\mu$ maps elements of $\hilb$ into $\ker L_{0,\mu}^*$, because for all $\psi \in \ker a(V)$ it holds by definition of $G_\mu$ that
\begin{align*}
\langle L_{0,\mu}^* G_\mu \phi,  \psi \rangle =  \langle  \phi, G_\mu^* L_{0,\mu} \psi \rangle = - \langle  \phi, a(V) \psi \rangle = 0 \, .
\end{align*}
We will now define a family of subspaces of the adjoint domain $D(L_{0,\mu}^*)$. In order to do so, we decompose elements of $\hilb$ in the same way as in the theory of point interactions into the sum of two terms: one is regular, i.e. in $D(L)$, and one term is singular, that is, of the form $G_\mu \phi$. If we would like to define a sum of point interaction domains in $\hilb$, we would introduce a boundary or charge space where $\phi$ lives. But because $\hilb$ is an infinite sum, there is another possibility, namely to take $\psi$ itself as the charge. This is what we will do. Note that the decomposition $\psi = (1-G_\mu) \psi + G_\mu \psi$ holds for any $\psi \in \hilb$ and $\mu >0$. Then the family of domains is given by 
\begin{align*}
\dom_\mu := \{ \psi \in \hilb \vert (1-G_\mu) \psi \in D(L) \} \, .
\end{align*}
For $\mu, \lambda >0$, the resolvent identity yields
\begin{align*}
(G_\mu-G_\lambda)^* & =
% G^*_\mu-G^*_\lambda = - a(V) (L_\mu^{-1}-L_\lambda^{-1}) = 
- a(V)(\lambda-\mu) L_\mu^{-1}  L_\lambda^{-1} 
%\\
%&=  
%(\lambda-\mu) G_\mu^{*}  L_\lambda^{-1} = ((\lambda-\mu)  L_\lambda^{-1} G_\mu)^{*} 
= ((\lambda-\mu)  L_\mu^{-1} G_\lambda)^{*}  \, .
\end{align*}
In particular it holds that that $1-G_\mu = (1-G_\lambda) - (\lambda-\mu) L_\mu^{-1} G_\lambda$. Because $L_\mu^{-1} G_\lambda$ maps into $ D(L)$, this shows that the domain $\dom_\mu$ is in fact independent of the chosen $\mu>0$. We will denote it by $\dom$ from now on.

In the next step we have to extend the action of $a(V)$ from $D(L)$ to the enlarged domain $\dom$. The formal action of the annihilation operator on the range of $G_\mu$ would read 
\begin{align}
\label{eq:Toddef}
a(V) G_\mu& \psi \uppar n (p, k_1, \dots, k_n) \nonumber
\\
&=  \begin{aligned}[t] &- \psi \uppar n ( p, k_1, \dots, k_{n}) \int_{\R^3}  \frac{\abs{v(k_{n+1})}^2}{L_\mu(p, k_1, \dots, k_{n+1})} \ud k_{n+1} \\
& - \sum_{j=1}^{n} \int_{\R^3} \overline{v(k_{n+1})} v(k_j) \frac{\psi \uppar n ( p + k_j - k_{n+1}, k_1, \dots, \hat{k}_j , \dots, k_{n+1})}{L_\mu(p, k_1, \dots, k_{n+1})} \ud k_{n+1} \, . \end{aligned}
\end{align}
Here $L_\mu(p, k_1, \dots, k_{n+1})$ denotes the functions to which the operator $L_\mu$ reduce to on one sector of $\hilb$ in the Fourier representation. The off-diagonal part of this sum, the second line of \eqref{eq:Toddef}, constitutes an integral operator, which we will denote by $T^\mu_{\uod}$. The integral in the first line of \eqref{eq:Toddef} does in general not converge. In order to regularise this expression, we define the diagonal part of the $T$-operator
\begin{align}
\label{eq:Tddef}
T^\mu_\ud \psi(p, k_1, \dots, k_n) &:= - I_\mu(p,k_1, \dots, k_n) \cdot \psi \uppar n ( p, k_1, \dots, k_{n}) \, ,
\\ \label{eq:integraldef}
\mathrm{where }  \quad  I_\mu(p,k_1, \dots, k_n) &:=  \int_{\R^3}  \frac{\abs{v(k_{n+1})}^2}{L_\mu(p, k_1, \dots, k_{n+1})} - \frac{\abs{v(k_{n+1})}^2}{k_{n+1}^2+\omega(k_{n+1})} \ud k_{n+1} \, .
\end{align}
Now define the action of $T^\mu \psi := T^\mu_{\ud} \psi+T^\mu_{\uod} \psi$ on a (maximal) domain $\mathrm{D}^\mu \subset \hilb$. At first, this definition seems to depend again on the choice of $\mu >0$. Note however that, because the second term of the integral $I_\mu$ in~\eqref{eq:integraldef} is independent of the parameter $\mu >0$, it holds that 
\begin{align}
\label{eq:Tmuindep}
T^\mu - T^\lambda = a(V) (G_\mu-G_\lambda) = a(V) (\lambda-\mu)L^{-1}_\mu G_\lambda = (\mu- \lambda) G_\mu^{*} G_\lambda \, .
\end{align}
Because the operators $G_\mu$ are continuous, this implies that $\psi \in \mathrm{D}^\lambda$ for any $\lambda>0$ as soon as $\psi \in \mathrm{D}^\mu$ for some $\mu>0$. Set $D(T)=\mathrm{D}^\mu $.
While the action of $T^\mu$ does of course still depend on the chosen parameter, this operator gives rise to the desired extension of $a(V)$. We define the action of the full extension for all $\psi \in D(T) \cap \dom$ as
\begin{equation}
\label{eq:defofA}
A^\mu \psi := a(V) (1-G_\mu) \psi +  T^\mu \psi \, .
\end{equation}
 As a consequence of~\eqref{eq:Tmuindep}, we have
\begin{align*}
A^\mu 
&
%= a(V) (1-G_\mu) + T^\mu 
%=
= a(V) (1-G_\lambda) +  a(V) (G_\lambda-G_\mu) + T^\mu 
%\\
= a(V) (1-G_\lambda) +  T^\lambda = A^\lambda \, .
\end{align*}
Therefore we can define the operator $(A,\dom \cap D(T))$ by choosing any $\mu > 0$. Finally we may also define the action of our Hamiltonian manifestly independent of the spectral parameter:
\begin{align*}
H := L^*_{0,0} + A \, .
\end{align*}
Using the definition of $G_\mu$ and $T^\mu$, we can rewrite it in a convenient form that contains the positive spectral parameter:
\begin{align}
\label{eq:symmetricformofH}
H =  (1-G_\mu)^* L_\mu (1-G_\mu) + T^\mu - \mu \, .
\end{align}
In \cite{nelsontype}, it was assumed that $\omega \geq 1$, and as a consequence of the resulting bound $N \leq L$, it was possible to define $G^*:=G_0^*=-a(V)L^{-1}$ without the need for a parameter. We would however like to make clear that the use of a spectral parameter was avoided only for convenience and better readability and is by no means the real benefit of the assumption $\omega \geq 1$.
%Clearly, $A$ is only an extension of $a(V)$ to $\dom$ if $\dom \subset D(T)$.
% So to have $D(H) \subset D(T)$ is a necessity but in order to show self-adjointness of $H$ more is needed. 

In order to show self-adjointness of $H$, we will adopt the strategy of~\cite{nelsontype}, where the representation~\eqref{eq:symmetricformofH} (for $\mu=0)$ was used. At first, we have to show that $H_0^\mu := (1-G_\mu)^* L_\mu (1-G_\mu)$ is self-adjoint. In~\cite[Lem.~3.3]{nelsontype} the estimate $N \leq L$ was envoked to show directly the continuous invertibility of $(1-G_0)$, from which the self-adjointness of $H_0^0$ follows. Since we can not use this estimate, we will show that there exists $\mu_0>$ such that $\norm{G_\mu} < 1$ for all $\mu > \mu_0$. The main problem to overcome is however the inclusion $\dom \subset D(T)$ or, more precisely, the relative boundedness of $T^\mu$ with respect to $H_0^\mu$.

The proof of the relative bound for $T^0$ in \cite{nelsontype} makes extensive use of the inequality $N \leq L$ and the resulting fact that $(1-G_0)$ leaves $D(N)$ invariant. For that reason, this strategy is not helpful in the massless case. In fact, because there is no relation between $N$ and $L$, it will be necessary to use characterisations of the domains $D(T)$ and $\dom$ that are independent of $N$ altogether. We will illustrate the problems that occur with this strategy for the example of the Nelson model. While~\cite[Prop.~3.5]{nelsontype} gives -- for this specific model -- an $n$-independent inclusion $D(L^{1/2}) \subset D(T)$, the statement of~\cite[Lem.~3.2]{nelsontype} yields that $G_0$ maps $\hilb$ into $D(L^{\eta})$ for any $0\leq \eta < 1/4$. These exponents do not match together and this is the very problem we have to overcome if we want to define $T^\mu$.
Differentiating between the diagonal and the off-diagonal part of $T^\mu$, we easily observe that, what is actually proven in~\cite{nelsontype} is that on the one hand $D(\Omega^{1/2}) \subset D(T_{\uod})$, but on the other hand $D(L^\eps) \subset D(T_\ud)$ for all $\eps>0$. Thus, at least in the Nelson model, the diagonal part of the operator $T$ seems to pose no problems. The off-diagonal part could be dealt with, if the mapping properties of $G_\mu$ are such that $\dom\subset D(\Omega^{1/2})$. This is exactly what we will prove in the following for a certain class of models under some assumptions on $v$ and $\omega$ in three space dimensions.
\clearpage
\section{Assumptions and Theorems}

Let the dimension of the physical space be equal to three and assume that there exist $\alpha \in [0, 3/2)$ and a constant $c >0$ such that for $v \in \Lz_{\mathrm{loc}}(\R^3)$ it holds that $c (1+\abs{k}^\alpha)^{-1} \leq \abs{v(k)} \leq \abs{k}^{-\alpha} $. Furthermore, there exists $\beta \in (0,2]$ and a constant $\mathsf{m}\geq 0$ such that for $\omega \in L^\infty_{\mathrm{loc}}(\R^3)$ it holds that $ \abs{k}^{\beta} \leq \omega(k) \leq \abs{k}^{\beta}+ \mathsf m$. Defining $D:=1-2 \alpha$ we always assume that $0 \leq D < \beta$.

Note that the Nelson model is contained in this class because $v=\omega^{-1/2}$ allows us to choose $\alpha=1/2$. Clearly $\beta$ is equal to $1$. The upper and lower bounds on $\omega$ hold because $\sqrt{k^2+\mathsf{m}^2} \leq \abs{k}+\mathsf{m}$. It will not be necessary to distinguish between the massive and the massless case, for the only important thing is the pair $(\beta,D)$, which is equal to $(1,0)$ in the Nelson model. Our first result, Proposition~\ref{prop:reverse}, is concerned with regularity properties of a family of domains $\mathfrak{D}^\sigma$. Its proof can be found in Section~\ref{proofofpropreverse}.
\begin{prop}
\label{prop:reverse}
Let $ \beta \in (0,2]$, let $0 \leq D < \beta/2$ if $\beta <2$ and $0<D<1$ if $\beta =2$. Let $\psi\neq 0$ and $\kappa, \eta \in [0,\sigma]$ for some $\sigma \in (0, 1]$. \\
If
\begin{align*}
\psi \in \mathfrak{D}^\sigma = \{ \psi \in \hilb \vert (1-G_\mu) \psi \in D(L^\sigma) \text{ for some } \mu>0 \} \, ,
\end{align*} 
then $\psi \in D(L^\kappa)$ if and only if $\kappa< \frac{2-D}{4}$, and $\psi \in D(\Omega^\eta)$ if and only if $\eta< \frac{2-D}{2 \beta}$.
\end{prop}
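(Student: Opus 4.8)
The plan is to isolate the singular part $G_\mu\psi = \psi - (1-G_\mu)\psi$ and reduce the whole statement to a one-dimensional integral. Fix $\mu$ so large that $\norm{G_\mu}<1$ (possible by the bound on $\norm{G_\mu}$ for large $\mu$ mentioned above), put $\chi := (1-G_\mu)\psi \in D(L^\sigma)$, and recall $\psi = \sum_{k\ge 0} G_\mu^k\chi$. Since $0\le\Omega\le L$ and $t\mapsto t^\theta$ is operator monotone on $[0,\infty)$ for $\theta\in[0,1]$, one has $\Omega^\theta\le L^\theta$, hence $D(L^\sigma)\subseteq D(L^\kappa)\cap D(\Omega^\eta)$ for all $\kappa,\eta\in[0,\sigma]$; thus the regular summand $\chi$ already lies in every space in question, and everything is controlled by the tail behaviour of $G_\mu\psi$. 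In particular, for the two ``if'' directions it suffices to bound $\norm{L^\kappa G_\mu\phi}$ and $\norm{\Omega^\eta G_\mu\phi}$ by $\norm{\phi}$ for $\phi\in\hilb$, $\kappa<\tfrac{2-D}{4}$, $\eta<\tfrac{2-D}{2\beta}$; the bounds then propagate through the Neumann series.

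To bound $\norm{L^\kappa G_\mu\phi}$ I would write $\norm{L^\kappa G_\mu\phi}^2 = \langle\phi,\,G_\mu^* L^{2\kappa} G_\mu\phi\rangle$ and, using $L^{2\kappa}\le L_\mu^{2\kappa}$ together with $G_\mu^* L_\mu^{2\kappa}G_\mu = a(V)L_\mu^{2\kappa-2}a^*(V)$, estimate the expectation of $a(V)L_\mu^{2\kappa-2}a^*(V)$; similarly $\norm{\Omega^\eta G_\mu\phi}^2 = \langle\phi,\,a(V)L_\mu^{-1}\Omega^{2\eta}L_\mu^{-1}a^*(V)\phi\rangle$, where it is essential to keep $\Omega$ (estimating it by $L$ would give only $\tfrac{2-D}{4}$, which is where the case $\beta<2$ differs). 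In each case I split $a(V)(\cdot)a^*(V)$ into its diagonal part --- the operator of multiplication on the $n$-quantum sector by $\int_{\R^3}\abs{v(k)}^2\,L_\mu(p-k,k_1,\dots,k_n,k)^{2\kappa-2}\,\ud k$, respectively by $\int_{\R^3}\abs{v(k)}^2\,(\textstyle\sum_i\omega(k_i)+\omega(k))^{2\eta}\,L_\mu(p-k,k_1,\dots,k_n,k)^{-2}\,\ud k$, arising from the channel in which $a(V)$ removes the quantum just created --- and an off-diagonal integral operator. The diagonal multiplier is dominated by $\int_{\R^3}\abs{k}^{-2\alpha}\big((p-k)^2+\omega(k)+\mu\big)^{2\kappa-2}\,\ud k$, which is finite uniformly in $p$ (and in the other variables) exactly when $2\alpha<3$ (integrability at $k=0$, the only place $v$ may blow up), when $\mu>0$ (at $k=p$), and when $4\kappa-4-2\alpha<-3$ as $\abs k\to\infty$ --- where the particle recoils, so $(p-k)^2\sim\abs k^2$ dominates $\omega(k)\lesssim\abs k^2$ --- i.e. precisely when $\kappa<\tfrac{2-D}{4}$; a split of $\R^3$ into the zones $\abs k<\abs p/2$, $\abs p/2<\abs k<2\abs p$, $\abs k>2\abs p$ yields the uniformity in $p$. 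For the $\Omega^\eta$-variant one distinguishes whether the new quantum or the remaining ones carry the field energy, and the decisive exponent at infinity becomes $2\beta\eta-4-2\alpha$, so the diagonal part is bounded exactly when $\eta<\tfrac{2-D}{2\beta}$.

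For the ``only if'' directions, pick $m$ with $\psi\uppar m\ne 0$; then $(G_\mu\psi)\uppar{m+1} = -L_\mu^{-1}(m+1)^{-1/2}\sum_{j=1}^{m+1}v(k_j)\psi\uppar m(p+k_j,\dots)$. I would test $(G_\mu\psi)\uppar{m+1}$ against the region $R_N = \{\abs{k_{m+1}}>N,\ \abs{p+k_{m+1}}<1,\ \abs{k_i}<1\ (i\le m)\}$, translated so that $\psi\uppar m$ has positive $L^2$-mass on the corresponding box $B$. On $R_N$ one has $L\asymp L_\mu\asymp\abs{k_{m+1}}^2$, $\omega(k_{m+1})\asymp\abs{k_{m+1}}^\beta$, $\abs{v(k_{m+1})}\asymp\abs{k_{m+1}}^{-\alpha}$, so the $j=m+1$ term alone contributes to $\norm{L^\kappa(G_\mu\psi)\uppar{m+1}}^2$ over $R_N$ a quantity comparable to $\big(\int_{\abs k>N}\abs k^{4\kappa-2\alpha-4}\,\ud k\big)\int_B\abs{\psi\uppar m}^2$, the first factor being $+\infty$ once $\kappa\ge\tfrac{2-D}{4}$; the $\Omega^\eta$-weighted norm is comparable to $\big(\int_{\abs k>N}\abs k^{2\beta\eta-2\alpha-4}\,\ud k\big)\int_B\abs{\psi\uppar m}^2$, infinite once $\eta\ge\tfrac{2-D}{2\beta}$. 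The terms $j\le m$ evaluate $\psi\uppar m$ at a particle momentum of modulus $>N-O(1)$, so by dominated convergence their contribution over $R_N$ is finite and $\to 0$ as $N\to\infty$; by the triangle inequality the weighted norm of $(G_\mu\psi)\uppar{m+1}$ over $R_N$ is therefore infinite, and since $\chi\in D(L^\sigma)$ contributes only finitely for $\kappa,\eta\le\sigma$, also $\psi\notin D(L^\kappa)$ for $\kappa\ge\tfrac{2-D}{4}$ and $\psi\notin D(\Omega^\eta)$ for $\eta\ge\tfrac{2-D}{2\beta}$.

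The step I expect to be genuinely delicate is the boundedness of the off-diagonal integral operators from the second paragraph, uniformly in the number of quanta: this is exactly where the massless case differs from the massive one, because the naive Cauchy--Schwarz over the creation/annihilation channels loses a power of that number, and there is no longer the inequality $N\le L$ to reabsorb it. I expect this to succeed outright only for small exponents (where the extra decay $2\kappa-2<-1$, together with the hypothesis $D<\beta/2$ and the separate treatment of $\beta=2$, makes the off-diagonal operator bounded on $\hilb$), and to require a bootstrap for the full range: the $L^\kappa$- and $\Omega^\eta$-statements feed each other, the regularity gained at one stage being used to form-bound the off-diagonal part at the next, so the exponent climbs by a fixed increment each time until it reaches the claimed threshold.
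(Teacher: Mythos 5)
Your ``only if'' half is essentially the paper's argument (isolate the channel in which the newly created boson carries the large momentum, localise the spectator variables where $\psi\uppar m$ has mass, and exhibit a divergent $k$-integral; you dispose of the cross terms by sending them to large particle momentum instead of using the elementary inequality $\abs{a+b}^2\ge \tfrac12 a^2-b^2$ as in Section~\ref{proofofpropreverse}, which is fine). The gap is in the ``if'' half. You reduce it to the claim that $G_\mu$ is bounded from $\hilb$ into $D(L^\kappa)$ and into $D(\Omega^\eta)$ for all $\kappa<\tfrac{2-D}{4}$, $\eta<\tfrac{2-D}{2\beta}$, but this is not true in that range when $\beta<2$, and your own diagonal-multiplier computation shows why once the omitted variables are kept: the multiplier $\int\abs{v(k)}^2\bigl(\Omega(\hat K)+\omega(k)\bigr)^{2\eta}L_\mu(p-k,K)^{-2}\,\ud k$ must be bounded uniformly in the spectator field energy $\Omega(\hat K)$ and in $p$, not only in the regime where $\omega(k)$ of the created boson dominates. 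When $\Omega(\hat K)\gg\abs k^\beta$ the numerator scales like $\Omega(\hat K)^{2\eta}$ against a denominator $\sim((p-k)^2+\Omega(\hat K))^{2}$, and the $k$-integral behaves like $\Omega(\hat K)^{2\eta-(2-D)/2}$, unbounded as soon as $\eta>\tfrac{2-D}{4}$; similarly, for large $\abs p$ (with moderate $\Omega(\hat K)$) the region $k\approx p$ produces growth $\sim\abs p^{\,2\beta\eta-(1-D)-\beta/2}$, i.e.\ failure above $\eta=\tfrac{\beta+2-2D}{4\beta}$. For the Nelson model these thresholds are $1/2$ and $3/4$, strictly below the claimed $1$. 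So ``bound from $\hilb$ and propagate through the Neumann series'' cannot reach the stated range; your closing remark locates the difficulty only in the off-diagonal part and in the loss of a factor of $n$, but the obstruction above is already in the diagonal part, and the proposed bootstrap is not developed and does not address it.

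What the paper does instead (and what your argument is missing) is to give up boundedness from $\hilb$ and prove instead that $G_\mu$ is a contraction, for large $\mu$, on the smaller space $D(\Omega^\eta)\cap D(L^{q_\eps(\eta)})$: Lemma~\ref{lem:GintoL} gives $G_\mu\colon D(\Omega^\kappa)\to D(L^\kappa)$ and Lemma~\ref{lem:GintoOmega} gives $G_\mu\colon D(\Omega^\eta)\cap D(L^{q_\eps(\eta)})\to D(\Omega^\eta)$, the exponent $q_\eps(\eta)=\max\bigl(0,\eta+\eps-\tfrac{\beta+2-2D}{4\beta}\bigr)$ being exactly the price of the $p$-growth above, while the $\Omega(\hat K)$-growth is absorbed by the $\Omega^\eta$-regularity of the input (and, for the off-diagonal channels, by the weighted Cauchy--Schwarz with weights $\omega(k_\nu)^s$, $s\ge1$, which replaces the unavailable bound $N\le L$ --- this, not a bootstrap, is how the factor of $n$ is handled). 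Then $1-G_\mu$ is inverted on that intersection space (Corollary~\ref{cor:1minusG}), and since $D(L^\sigma)$ embeds into it for the stated $\eta,\kappa\le\sigma$, one gets $\mathfrak D^\sigma\subset D(\Omega^\eta)\cap D(L^\kappa)$ as in Proposition~\ref{prop:infibounds}. This is precisely the ``exponents do not match'' issue described in Section~\ref{sec:model}; without replacing your $\hilb$-boundedness step by a contraction estimate on a space that already carries $\Omega$- and $L$-regularity, the ``if'' direction does not close.
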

Note that also the more general domains $\dom^\sigma$ are independent of the spectral parameter $\mu>0$. If $\dom$ is written without superscript, it is always understood as $\dom^1$. \\ \

To prove self-adjointness of the operator $H$ on $\mathfrak{D}$, we need a more refined condition for the pair of parameters $(\beta,D)$.
\begin{cond}
\label{maincond}
Assume that the pair $(\beta,D)$ satisfies the following inequalities:
\begin{align*}
& 0\leq D< \frac{\beta^2}{2} & \beta \in (0, 2(\sqrt{2}-1)) 
 \\
 &
0\leq D< \frac{2 \beta}{\beta+4}  & \beta \in [2(\sqrt{2}-1),\sqrt{5}-1) 
 \\
 &  0\leq D<\frac{\beta^2 -2 \beta +2   }{\beta+1} & \beta \in [\sqrt{5}-1,2) 
\\
&
0 < D < 2/3 & \beta =2  \, .
\end{align*}
\end{cond}
Theorem~\ref{mainthm} is the main result of this article. It shows, that the only restriction one has to face when extending the construction from massive to massless models is the assumption of the lower bound $D>0$ for $\beta =2$. The upper bound on admissible $D$ is weaker than the one of \cite[Cond~1.1]{nelsontype}, which is  $D < \frac{2 \beta^2}{\beta^2+8}$. Therefore the Theorem~\ref{mainthm} extends the result of the former article to pairs $(\beta , D)$ fulfilling Condition~\ref{maincond}. 

\begin{thm}
\label{mainthm}
If Condition~\ref{maincond} holds, then the operator 
\begin{align*}
H := (L \big \vert_{\ker a(V)})^* + A \, ,
\end{align*}
with $A$ defined in~\eqref{eq:defofA}, is self-adjoint and bounded from below on the domain
\begin{align*}
D(H) := \mathfrak{D} = \{ \psi \in \hilb \vert (1-G_\mu) \psi \in D(L) \text{ for some } \mu>0 \} \, .
\end{align*}
\end{thm}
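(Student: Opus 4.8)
The plan is to prove self-adjointness through the symmetric representation~\eqref{eq:symmetricformofH}: one writes $H+\mu = H_0^\mu + T^\mu$ with $H_0^\mu := (1-G_\mu)^*L_\mu(1-G_\mu)$ and chooses the spectral parameter $\mu$ large enough that \textbf{(i)} $H_0^\mu$ is self-adjoint and strictly positive on $\dom$, and \textbf{(ii)} $T^\mu$ is symmetric and $H_0^\mu$-bounded with relative bound strictly below one. Given both, the Kato--Rellich theorem yields that $H = H_0^\mu + T^\mu - \mu$ is self-adjoint on $D(H_0^\mu)$ and bounded from below; the relative bound in particular forces $\dom\subset D(T)$, so that $A$ from~\eqref{eq:defofA} is defined on all of $\dom$ and $D(H) = \dom$, while the independence of $H$ from $\mu$ has already been established in Section~\ref{sec:model}. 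This is the scheme of~\cite{nelsontype}, the point being to run it without the inequality $N\le L$ and to use the regularity statement of Proposition~\ref{prop:reverse} as a substitute.

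For \textbf{(i)} the crucial point is that $\norm{G_\mu}\to 0$ as $\mu\to\infty$, hence $\norm{G_\mu}<1$ for all sufficiently large $\mu$. This follows from an estimate on $G_\mu^* = -a(V)L_\mu^{-1}$ of the kind used in~\cite{nelsontype}, arranged so as not to invoke $N\le L$: one exploits that $L_\mu^{-1}$ is small in norm relative to lower powers of $L_\mu$ for large $\mu$, together with the standing bounds $\abs{v(k)}\le\abs k^{-\alpha}$ and $\abs k^\beta\le\omega(k)$, which under $\alpha<3/2$ and $0\le D<\beta$ keep the relevant momentum integrals finite. Once $\norm{G_\mu}<1$, the operator $1-G_\mu$ has a bounded inverse, so $H_0^\mu = (1-G_\mu)^*L_\mu(1-G_\mu)$ is self-adjoint on $\{\psi\in\hilb : (1-G_\mu)\psi\in D(L)\} = \dom$, and $\langle H_0^\mu\psi,\psi\rangle \ge \mu\norm{(1-G_\mu)\psi}^2 \ge \mu(1-\norm{G_\mu})^2\norm\psi^2 > 0$; this is the analogue of~\cite[Lem.~3.3]{nelsontype}.

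Step \textbf{(ii)} is the heart of the matter, and this is where the separation $T^\mu = T^\mu_\ud + T^\mu_{\uod}$ from~\eqref{eq:Tddef}--\eqref{eq:Toddef} is essential. For the diagonal part one estimates the subtracted integral $I_\mu$ of~\eqref{eq:integraldef}: the subtraction makes it finite for every $\alpha<3/2$, and it grows in $(p,k_1,\dots,k_n)$ no faster than a small power of $L$, so that $\norm{T^\mu_\ud\psi}\le C_\mu(\norm{L^\theta\psi}+\norm\psi)$ for an $n$-independent, small $\theta\ge0$ depending on $(\beta,D)$ (any $\theta>0$ in the Nelson case). For the off-diagonal part one must prove an $n$-independent bound controlling $T^\mu_{\uod}$ by a fractional power of the field energy, roughly $\norm{T^\mu_{\uod}\psi}\lesssim\norm{\Omega_\mu^\eta\psi}$ with $\eta$ depending on $(\beta,D)$ (and equal to $1/2$ in the Nelson model), possibly together with an additional small power of $L_\mu$. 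These two bounds are then combined with Proposition~\ref{prop:reverse}, used in the quantitative form in which it is proved: for $\psi\in\dom=\dom^1$ one has the $H_0^\mu$-subordinacy estimates $\norm{\Omega^\eta\psi}+\norm{L^\kappa\psi}\le a_\mu\norm{H_0^\mu\psi}+b_\mu\norm\psi$ with $a_\mu\to 0$ as $\mu\to\infty$, valid whenever $\eta<\tfrac{2-D}{2\beta}$ and $\kappa<\tfrac{2-D}{4}$ --- these rest on decomposing $\psi=(1-G_\mu)\psi+G_\mu\psi$, interpolating the regular piece against $H_0^\mu$ via $\norm{L_\mu^s(1-G_\mu)\psi}\lesssim\norm{H_0^\mu\psi}^s\norm\psi^{1-s}$ (using $\Omega_\mu\le L_\mu$) and using the mapping properties of $G_\mu$ on the singular piece. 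Feeding these into the two $T^\mu$-estimates yields $\norm{T^\mu\psi}\le a'_\mu\norm{H_0^\mu\psi}+b'_\mu\norm\psi$ with $a'_\mu$ as small as desired, the required relative bound.

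The main obstacle is exactly to make this last combination close, that is, to reconcile the order of the singularity carried by $T^\mu$ with the limited smoothing of $G_\mu$: as recalled in the introduction, $G_\mu$ only gains about a quarter of a power of $L$, whereas the annihilation operator hidden in $T^\mu$ costs about half a power, and the two meet only because the diagonal and off-diagonal parts are treated separately and because of the sharp thresholds $\tfrac{2-D}{4}$ and $\tfrac{2-D}{2\beta}$ of Proposition~\ref{prop:reverse}. Tracking which exponent $\eta$ is genuinely needed on the $T^\mu_{\uod}$-side and requiring it to lie strictly below $\tfrac{2-D}{2\beta}$ (the analogous requirement for the diagonal part being harmless) is what produces the four-fold case distinction of Condition~\ref{maincond}; moreover the relevant inequality degenerates at $\beta=2$, which is the reason the extra hypothesis $D>0$ has to be imposed there. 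I expect this balancing of fractional-power estimates, together with the proof of Proposition~\ref{prop:reverse} on which it rests, to be the most delicate part of the argument.
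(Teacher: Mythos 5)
Your proposal follows essentially the same route as the paper: the symmetric representation~\eqref{eq:symmetricformofH}, invertibility of $1-G_\mu$ for large $\mu$ to get self-adjointness of $H_0^\mu$, the split $T^\mu = T^\mu_\ud + T^\mu_{\uod}$ with the diagonal part controlled by a small power of $L$ and the off-diagonal part by $\Omega^{s-u(s)}$, the subordinacy estimates behind Proposition~\ref{prop:reverse} (Corollary~\ref{cor:infibounds2}) to convert these into infinitesimal $H_0^\mu$-bounds, and Kato--Rellich, with Condition~\ref{maincond} arising from optimizing the exponent constraints over $s\in[1,2/\beta)$. This matches the paper's proof in both structure and in the identification of the delicate points (the threshold $\tfrac{2-D}{2\beta}$ and the degeneration at $\beta=2$).
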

The proof of Theorem~\ref{mainthm} will be given in Section~\ref{subsectproofsmain}. 
\begin{rem}
The condition $0 \leq D < \beta/2$, which was assumed in Proposition~\ref{proofofpropreverse} \textit{ does not} ensure that $H$ is self-adjoint. However Condition~\ref{maincond} clearly implies that $0 \leq D < \beta/2$, so the statement of Proposition~\ref{prop:reverse} is in particular valid in cases where $\mathfrak{D}=D(H)$ is the domain of the self-adjoint operator $H$ and $\dom^{1/2}$ is its form domain. The Plot~\ref{plot} shows the different regions of admissible pairs of parameters. In general, we consider pairs where $0\leq D < \beta$ for $\beta<2$ and $D \in (0, 2)$ if $\beta=2$. The area below the dotted line, which also excludes the point $(\beta,D)=(2,0)$, is the one for which Proposition~\ref{prop:reverse} charaterises the domain $\dom$. It is, in our language, also the area for which \cite{GrWu17} shows that a renormalisation procedure can be implemented using a Gross transformation. The area below the plain line, again without the point at the right lower corner, is formed by the admissible pairs according to Condition~\ref{maincond}. The area below the dashdotted line is the one that is allowed in \cite[Cond~1.1]{nelsontype}. Because there only massive models are considered, the point $(2,0)$ is however admissible. 
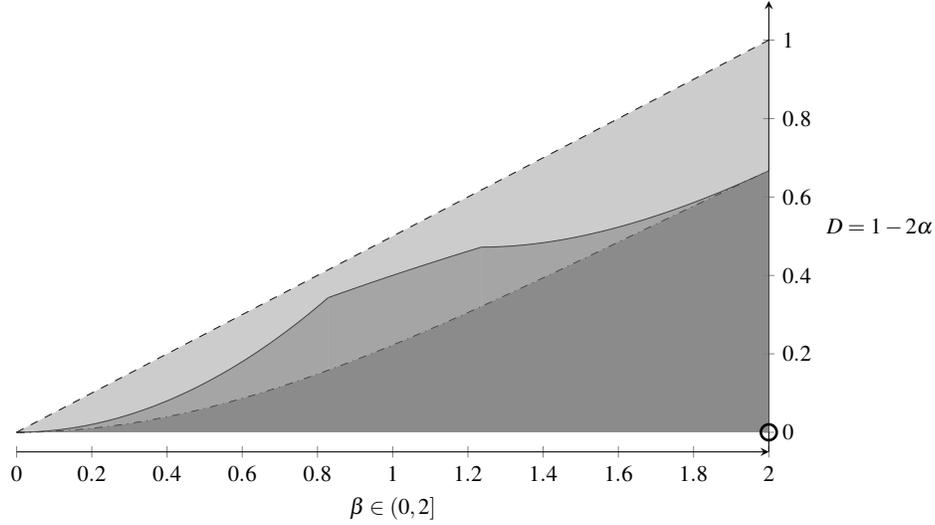
\begin{figure}[H]
  \caption{Admissible Pairs ${(\beta,D)}$}
  \label{plot}
  \begin{tikzpicture}
\begin{axis}[ width=10cm,height=6cm, scale only axis, ymin=-0.05,ymax=1.1,xlabel=${\beta \in (0,2]}$,ylabel = ${D=1-2 \alpha}$, axis x line = bottom,axis y line
=right,xticklabel style={/pgf/number format/precision=5},ylabel style={rotate=270},enlargelimits=false]

%\addplot[dotted,domain=0:1,samples=200]
%{x};
%\addplot[black,domain=0:2,samples=200]
%{0.00000001};

\addplot[black,mark size=3pt, line width = 1 pt, only marks,mark=*,fill=blue, fill opacity=0]
coordinates{(2,0.00001)};
%\addplot[black,mark size=2pt, line width = 0.5 pt,only marks,mark=*,fill=white]
%coordinates{(0,0.00001)};

\addplot[domain=0:2*(sqrt(2)-1),samples=201,]
{x*x/2};
\addplot[transparent,domain=0:2*(sqrt(2)-1),samples=201,fill=gray,fill opacity=0.5]
{x*x/2} \closedcycle;

\addplot
[domain=2*(sqrt(2)-1):sqrt(5)-1,samples=201]
{2*x/(x+4)};
\addplot
[transparent,domain=2*(sqrt(2)-1):sqrt(5)-1,samples=201,fill=gray,fill opacity=0.5]
{2*x/(x+4)} \closedcycle;

\addplot
[domain=sqrt(5)-1:2,samples=201,]
{(x*x-2*x+2)/(x+1)};
\addplot
[transparent,domain=sqrt(5)-1:2,samples=201,fill=gray,fill opacity=0.5]
{(x*x-2*x+2)/(x+1)} \closedcycle;

\addplot
[
dashdotted,
domain
=0:2,
samples
=201,
]
{(2*x*x)/(x*x+8)};
\addplot
[
transparent,
domain
=0:2,
samples
=201, fill=gray,fill opacity=0.7
]
{(2*x*x)/(x*x+8)} \closedcycle;

\addplot
[dashed, domain=0:2]
{x/2};
\addplot
[transparent, domain=0:2,fill=gray,fill
opacity=0.4]
{x/2} \closedcycle;

\end{axis}
\end{tikzpicture} 
%\begin{tikzpicture}
%\begin{axis}[ width=10cm,height=6cm, scale only axis, ymin=-0.05,ymax=1.1,xlabel=${\beta \in (0,2]}$,ylabel = ${D=1-2 \alpha}$, axis x line = bottom,axis y line
%=right,xticklabel style={/pgf/number format/precision=5},ylabel style={rotate=270},enlargelimits=false]
%
%\addplot[black,domain=0:1,samples=200]
%{x};
%\addplot[black,domain=0:2,samples=200]
%{0.00000001};
%
%\addplot[green,mark size=3pt, line width = 1 pt, only marks,mark=*,fill=blue, fill opacity=0]
%coordinates{(2,0.00001)};
%%\addplot[black,mark size=2pt, line width = 0.5 pt,only marks,mark=*,fill=white]
%%coordinates{(0,0.00001)};
%
%\addplot[red,domain=0:2*(sqrt(2)-1),samples=201,]
%{x*x/2};
%\addplot[transparent,domain=0:2*(sqrt(2)-1),samples=201,fill=red,fill opacity=0.2]
%{x*x/2} \closedcycle;
%
%
%\addplot
%[red,domain=2*(sqrt(2)-1):sqrt(5)-1,samples=201]
%{2*x/(x+4)};
%\addplot
%[transparent,domain=2*(sqrt(2)-1):sqrt(5)-1,samples=201,fill=red,fill opacity=0.2]
%{2*x/(x+4)} \closedcycle;
%
%\addplot
%[red,domain=sqrt(5)-1:2,samples=201,]
%{(x*x-2*x+2)/(x+1)};
%\addplot
%[transparent,domain=sqrt(5)-1:2,samples=201,fill=red,fill opacity=0.2]
%{(x*x-2*x+2)/(x+1)} \closedcycle;
%
%\addplot
%[
%blue,
%domain
%=0:2,
%samples
%=201,
%]
%{(2*x*x)/(x*x+8)};
%\addplot
%[
%transparent,
%domain
%=0:2,
%samples
%=201, fill=blue,fill opacity=0.2
%]
%{(2*x*x)/(x*x+8)} \closedcycle;
%
%
%\addplot
%[green, domain=0:2]
%{x/2};
%\addplot
%[transparent, domain=0:2,fill=green,fill
%opacity=0.2]
%{x/2} \closedcycle;
%
%\end{axis}
%\end{tikzpicture} 
\end{figure}
The characterisations of $D(H)$ and $D(\abs{H}^{1/2})$ provide a more detailed answer to Nelson's second question (for the admissible pairs) when compared to the result of Griesemer and Wünsch. First of all, the method in \cite{GrWu17} only allows for the characterisation of the \textit{form domain} of the limiting Hamiltonian. We can reproduce their earlier result here because setting $\sigma=1/2$ in Proposition~\ref{prop:reverse} yields that 
\begin{align*}
D(\abs{H}^{1/2}) = \dom^{1/2} \subset \bigcap_{0 \leq \kappa < \frac{2-D}{4}} D(L^\kappa) \cap D(\Omega^{1/2}) 
\end{align*}
as long as $2-D > \beta$, which is in particular fulfilled for the Nelson model. For determining supersets of the \textit{operator domain} $D(H)=\dom$, the IBC method is the only tool avaliable. For the Nelson model, massive or massless, Proposition~\ref{prop:reverse} implies that $D(H) \subset D(\Omega^\eta)$ for all $\eta <1$ but $D(H) \cap D(\Omega) = \{ 0 \}$.
\end{rem}

\section{Constructing the Hamiltonian}
In the main part of the article we will carry out the program that has been sketched in the introduction. The possibility to set up the operators $G$ and $T$ using positive parameters $\mu>0$ and the results about the \textit{parameter-independence} of the domains $\dom^\sigma$ and the operator $A$ will not be repeated. They can be found in Section~\ref{sec:model}. We will discuss the mapping properties of $G_\mu$ and fit them together with those of $T^\mu$. In this way, we will prove self-adjointness of the Hamiltonian $H$ (Theorem~\ref{mainthm}) and obtain the characterisation of the domains $\dom^\sigma$ in terms of domains of powers of $\Omega$ and $L$ (Proposition~\ref{prop:reverse}).

We will from now on assume that the spectral parameter $\mu$ is greater than one, $\mu\geq 1$. When writing $D(L^x)$ without index for some $x \in \R \setminus \{0 \}$ we mean the domain $D(L_\mu^x)$ for any $\mu \geq 1$. Note also that the assumption on $\mu$ guarantees monotonicity in the exponent, i.e. $L_\mu^x \leq L_\mu^y$ if $x\leq y$.

We will denote by $K$ the collection of variables $K:=(k_1, \dots, k_n)$. Consequently $\hat{K}_j := (k_1, \dots, \hat{k}_j, \dots, k_n)$ is the collection of variables with the $j$-th component omitted. We will use the symbols $L_\mu(p,K)=p^2 + \Omega_\mu(K)$ to denote the functions to which the operators reduce to on one sector of $\hilb$ in the Fourier representation.

Powers of the self-adjoint operators $\Omega$ and $L$ are self-adjoint on their resepctive domains $D(L^\kappa)$ etc., which are all continuously embedded in $\hilb$. We will regard the domains as Banach spaces equipped with the norms $\norm{\psi}_{D(L^\kappa)} = \norm{L^\kappa \psi}_\hilb + \norm{\psi}_\hilb$. The intersection of two such subspaces is a Banach space with norm $\norm{ \psi }_{D(L^\kappa) \cap D(\Omega^\eta)} := \max(\norm{\psi}_{D(L^\kappa)},\norm{ \psi }_{D(\Omega^\eta)})$. We will mostly use the equivalent norm given by the sum, i.e. $\norm{ \psi}_{D(L^\kappa)} + \norm{ \psi}_{D(\Omega^\eta)}$.

\subsection{Mapping Properties of $G_\mu$}
Let us begin with a technical lemma that will be useful later on. It is concerned with certain properties of the affine function $u(s):= (\beta s - D)/2$. This function itself plays an important role in the following because many relations between the parameters can be expressed with its help.
\begin{lem}
\label{lem:thetaass}
Let $ \beta \in (0,2]$, let $0 \leq D < \beta$ if $\beta <2$ and $0<D<2$ if $\beta =2$. Let $\eps_0 >0$ be such that $D+\eps_0 = \beta$. Define for any $0 < \eps < \eps_0$ the function
\begin{align}
\label{eq:thetas}
\theta_\eps(\beta,D) := \begin{cases} \frac{2-D-\eps}{2-\beta} &  \ D > \frac{3 \beta -2}{\beta} - \eps \\
\max(1/{\beta},1) &  \ D \leq \frac{3 \beta -2}{\beta} - \eps 
%\\
%1 & \beta \geq 1 
\, . \end{cases}
\end{align}
Let the affine transformation $u$ for all $s \in [0,\infty)$ be defined as $u(s):= (\beta s - D)/2$. Then it holds that $\theta_\eps \geq 1$. Furthermore $1+u(\theta_\eps) - \theta_\eps \geq \eps$ and $u (\theta_\eps) < 1$.
\end{lem}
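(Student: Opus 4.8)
The plan is to verify the three assertions $\theta_\eps \ge 1$, $1 + u(\theta_\eps) - \theta_\eps \ge \eps$, and $u(\theta_\eps) < 1$ by a straightforward case distinction according to the two branches of the definition \eqref{eq:thetas}, keeping in mind that $u(s) = (\beta s - D)/2$ is affine and increasing in $s$ (since $\beta > 0$). First I would record the elementary facts that I will use repeatedly: $D + \eps_0 = \beta$ with $0 < \eps < \eps_0$, so $D + \eps < \beta \le 2$; and $2 - \beta \ge 0$, with $2 - \beta = 0$ only in the case $\beta = 2$. In the second branch, where $D \le \frac{3\beta - 2}{\beta} - \eps$ and $\theta_\eps = \max(1/\beta, 1)$, the bound $\theta_\eps \ge 1$ is immediate. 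In the first branch, where $D > \frac{3\beta-2}{\beta} - \eps$ and $\theta_\eps = \frac{2-D-\eps}{2-\beta}$ (note $\beta < 2$ here, so the denominator is positive, and $2 - D - \eps > 2 - \beta > 0$ so $\theta_\eps > 0$), the inequality $\theta_\eps \ge 1$ reduces to $2 - D - \eps \ge 2 - \beta$, i.e. $\beta \ge D + \eps$, which holds by the observation above.

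Next I would treat $u(\theta_\eps) < 1$, i.e. $\beta \theta_\eps - D < 2$. In the first branch a direct computation gives $\beta \theta_\eps - D = \frac{\beta(2 - D - \eps)}{2 - \beta} - D = \frac{\beta(2-D-\eps) - D(2-\beta)}{2-\beta} = \frac{2\beta - 2D - \beta\eps}{2-\beta} = \frac{2(\beta - D) - \beta\eps}{2-\beta}$; one checks this is $< 2$ iff $2(\beta - D) - \beta \eps < 2(2-\beta)$, i.e. $4\beta - 2D - \beta\eps < 4$, i.e. $2D > 4\beta - 4 - \beta\eps$; since in this branch $D > \frac{3\beta-2}{\beta} - \eps$ we have $2D > \frac{2(3\beta-2)}{\beta} - 2\eps = 6 - \frac4\beta - 2\eps$, so it suffices to verify $6 - \frac4\beta - 2\eps \ge 4\beta - 4 - \beta\eps$, which after multiplying by $\beta>0$ becomes a quadratic inequality in $\beta$ that one confirms holds on $(0,2)$ for the relevant range of $\eps$ — I would do this bookkeeping carefully but it is routine. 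In the second branch $\theta_\eps = \max(1/\beta,1)$. If $\beta \le 1$ then $\theta_\eps = 1/\beta$ and $u(\theta_\eps) = (1 - D)/2 < 1$ since $D \ge 0$; if $\beta \ge 1$ then $\theta_\eps = 1$ and $u(\theta_\eps) = (\beta - D)/2 \le \beta/2 \le 1$, with the strict inequality coming from $D > \frac{3\beta-2}{\beta} - \eps \ge $ something positive when $\beta > 2/3$, or more directly from $\beta - D < \beta \le 2$ unless $D = 0$ and $\beta = 2$, a case one handles by noting that then the first branch condition $D > \frac{3\beta-2}{\beta} - \eps = 2 - \eps$ fails so we really are in the second branch only when $D$ is small and $\beta < 2$, or $\beta = 2$ is excluded from this branch — I would sort out precisely which subcase occurs.

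The remaining assertion $1 + u(\theta_\eps) - \theta_\eps \ge \eps$ is the one I expect to be the genuine computational heart of the lemma, and the main obstacle. In the first branch, writing $\theta = \theta_\eps = \frac{2-D-\eps}{2-\beta}$, I compute $1 + u(\theta) - \theta = 1 + \frac{\beta\theta - D}{2} - \theta = 1 - \frac D2 + \theta\left(\frac\beta2 - 1\right) = 1 - \frac D2 - \theta\,\frac{2-\beta}{2} = 1 - \frac D2 - \frac{2-D-\eps}{2} = \frac{2 - D - (2 - D - \eps)}{2} = \frac\eps2$. That is only $\eps/2$, not $\eps$ — so here I must be more careful: either the intended inequality uses $\ge \eps/2$ and I should flag it, or (more likely) the bound $1 + u(\theta_\eps) - \theta_\eps \ge \eps$ is stated with a particular normalisation of $\eps$ in mind and one should instead re-derive it with the correct factor, perhaps replacing $\eps$ by $2\eps$ in the definition, or the claim is really $\ge \eps$ for the \emph{original} $\eps_0$-scaled quantity. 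I would recheck the arithmetic, and if the factor genuinely is $\eps/2$ I would simply prove the sharp identity $1 + u(\theta_\eps) - \theta_\eps = \eps/2$ in the first branch (which is $\ge$ a suitably rescaled constant) and handle the second branch separately: there $\theta_\eps = \max(1/\beta,1)$, and $1 + u(\theta_\eps) - \theta_\eps$ equals $1 + \frac{1-D}{2} - \frac1\beta$ when $\beta \le 1$ and $1 + \frac{\beta - D}{2} - 1 = \frac{\beta-D}{2}$ when $\beta \ge 1$; using the branch hypothesis $D \le \frac{3\beta-2}{\beta} - \eps$ one sees $\frac{\beta - D}{2} \ge \frac{\beta}{2} - \frac{3\beta-2}{2\beta} + \frac\eps2 = \frac{\beta^2 - 3\beta + 2}{2\beta} + \frac\eps2 = \frac{(\beta-1)(\beta-2)}{2\beta} + \frac\eps2 \ge \frac\eps2$ since $(\beta-1)(\beta-2) \ge 0$ for $\beta \in [1,2]$, and similarly for the $\beta \le 1$ subcase. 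So in every case one gets the clean bound with the factor that the computation dictates; the only real subtlety is getting the constant right and matching it to the statement.
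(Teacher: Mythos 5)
Your first-branch computation $1+u(\theta_\eps)-\theta_\eps=\eps/2$ is correct, and you have in fact caught a slip in the paper: the proof there asserts that the equality $1+u(\theta_\eps)-\theta_\eps=\eps$ ``holds by construction'', but with $u(s)=(\beta s-D)/2$ and $\theta_\eps=\tfrac{2-D-\eps}{2-\beta}$ the value is exactly $\eps/2$ (one would need $\theta_\eps=\tfrac{2-D-2\eps}{2-\beta}$ to get $\eps$). The same factor appears in the other branch: for $\theta_\eps=1/\beta$ the branch condition $D\le\tfrac{3\beta-2}{\beta}-\eps$ is precisely equivalent to $1+u(1/\beta)-1/\beta\ge\eps/2$, so the sharp conclusion is $\ge\eps/2$, not $\ge\eps$. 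Since $\eps$ is a free small parameter this is harmless everywhere the lemma is used, but you are right to flag that the printed inequality does not survive the arithmetic. Note also that the paper reaches the lower bound by a cleaner route than your case-by-case check: $s\mapsto 1+u(s)-s$ is non-increasing (because $\beta\le 2$), it equals $\eps/2$ at $s=\tfrac{2-D-\eps}{2-\beta}$, and in both branches $\theta_\eps\le\tfrac{2-D-\eps}{2-\beta}$ (for $1/\beta$ this is exactly the branch condition, for $1$ it is $D+\eps<\beta$); this yields the lower bound and, as a by-product, $\theta_\eps\ge 1$.

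Two of your verifications, however, break at exactly the steps you defer to ``routine bookkeeping''. First, for $u(\theta_\eps)<1$ in the first branch you reduce to $2D>4\beta-4-\beta\eps$ and try to deduce it from the branch bound $2D>6-\tfrac{4}{\beta}-2\eps$; the required comparison $6-\tfrac{4}{\beta}-2\eps\ge 4\beta-4-\beta\eps$ simplifies to $\beta\ge\tfrac{2}{4-\eps}$ and therefore fails for small $\beta$ (e.g.\ $\beta=0.1$, $D=0$, where the first branch is active because $\tfrac{3\beta-2}{\beta}<0$). The missing observation, which the paper makes, is that the first-branch condition combined with $D<\beta-\eps$ forces $\beta^2>3\beta-2$, hence $\beta<1$; then $4\beta-4-\beta\eps<0\le 2D$ and you are done. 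Second, in the second branch with $\beta\in[1,2]$ you invoke $(\beta-1)(\beta-2)\ge 0$, but this product is $\le 0$ there, so the chain $\tfrac{\beta-D}{2}\ge\tfrac{(\beta-1)(\beta-2)}{2\beta}+\tfrac{\eps}{2}\ge\tfrac{\eps}{2}$ does not close; the correct elementary bound is simply $\beta-D=\eps_0>\eps$, which gives $1+u(1)-1=\tfrac{\beta-D}{2}>\tfrac{\eps}{2}$ directly. With these two repairs, and with $\eps$ replaced by $\eps/2$ in the statement, your argument goes through.
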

\begin{proof}
If $\theta_\eps = 1$, the hypothesis clearly implies that $u(\theta_\eps) <1$. When $\theta_\eps = 1/\beta$, then $u(\theta_\eps) =(1 - D)/2 \leq 1/2 $. If $D > \frac{3 \beta -2}{\beta} - \eps$ then, by definition of $\eps_0$, it holds that $\beta^2 > 3 \beta -2$. This implies that $\beta \in (0,1)$, in particular $\beta/(2-\beta) < 1$ and therefore  $u(\theta_\eps) <  (2-D-\eps-D)/2 < 1$.

In the upper case of \eqref{eq:thetas}, the equality $1+u(\theta_\eps) - \theta_\eps = \eps$ holds by construction. Because $1+u(s)-s$ is non-increasing, it remains to prove that $\frac{2-D-\eps}{2-\beta}$ is an upper bound for $\theta_\eps$. For $1/\beta$ this is the case if and only if $D \leq \frac{3 \beta -2}{\beta} - \eps$. If $\theta_\eps=1$, this follows easily because by definition $2-D-\eps > 2-\beta $.

The last step also proves that $\theta_\eps \geq 1$. \qed
\end{proof}

Now we will consider $G_\mu$ as an operator into $D(L^\kappa)$ under some conditions on $\kappa$. Later, when the target space will be enlarged to $D(\Omega^\eta)$, we will build on some of the formulas obtained here.
\begin{lem}
\label{lem:GintoL}
Let $ \beta \in (0,2]$, let $0 \leq D < \beta$ if $\beta <2$ and $0<D<2$ if $\beta =2$. Then for any $0\leq \kappa < (2-D)/4$ and any $\mu\geq 1$ it holds that $G_\mu$ is continuous from $D(\Omega^{\kappa})$ to $D(L^\kappa)$. There exists $\mu_0 \geq 1$ such that the norm of $G_\mu$ is smaller than $1$ for all $\mu > \mu_0$.
\end{lem}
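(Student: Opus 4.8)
The plan is to exploit the factorisation $G_\mu = -L_\mu^{-1}a^*(V)$, so that the asserted continuity $D(\Omega^\kappa)\to D(L^\kappa)$ is equivalent to the quadratic-form estimate
\begin{align*}
\norm{L_\mu^\kappa G_\mu\psi}^2 \;=\; \big\langle\psi,\,a(V)\,L_\mu^{2\kappa-2}\,a^*(V)\,\psi\big\rangle \;\le\; C\,\norm{\Omega_\mu^\kappa\psi}^2 \, ,
\end{align*}
with a constant $C=C(\mu)$ that moreover tends to $0$ as $\mu\to\infty$; specialising to $\kappa=0$ then yields the operator-norm claim with a suitable $\mu_0$. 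I would compute the left-hand side sector by sector from the explicit action of $a^*(V)$: on the $n$-th sector it is an expression quadratic in $\psi\uppar{n-1}$, a double sum over $j,j'\in\{1,\dots,n\}$ of terms $v(k_j)\overline{v(k_{j'})}\,L_\mu(p,K)^{2\kappa-2}\,\psi\uppar{n-1}(p+k_j,\hat K_j)\,\overline{\psi\uppar{n-1}(p+k_{j'},\hat K_{j'})}$. This splits into the \emph{diagonal} part ($j=j'$), which after the substitution $q=p+k_j$ collapses to the multiplication operator with symbol
\begin{align*}
I\big(p,\Omega_\mu(K)\big) \;:=\; \int_{\R^3}\frac{\abs{v(k)}^2}{\big((p-k)^2+\omega(k)+\Omega_\mu(K)\big)^{2-2\kappa}}\,\ud k \, ,
\end{align*}
and an \emph{off-diagonal} part ($j\neq j'$), which is a Schur-type double integral against $v\otimes\bar v$.

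The analytic core of both parts is the scalar integral $I(p,M)$ for $M\ge\mu\ge1$. I would split the region of integration: for $\abs k\le1$ one uses $\abs{v(k)}\le\abs k^{-\alpha}$ with $\alpha<3/2$ together with $M\ge1$ in the denominator, which gives a bounded contribution; for $\abs k\ge1$ the term $(p-k)^2$ dominates the denominator (this is where $\beta\le2$ enters), so the integrand is $\lesssim\abs k^{-2\alpha-2(2-2\kappa)}$ and integrability in three dimensions holds \emph{precisely} when $2\alpha+2(2-2\kappa)>3$, i.e. when $\kappa<(2-D)/4$ — exactly the standing hypothesis. The same bookkeeping shows that $\sup_{p\in\R^3}I(p,M)$ is bounded uniformly in $M\ge1$ and, since $M\ge\mu$, is in fact $O(\mu^{-(2-D)/2})$, hence decays as $\mu\to\infty$. (The exponent juggling of Lemma~\ref{lem:thetaass} is not yet needed here; it enters only when the target is later enlarged to $D(\Omega^\eta)$.)

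It then remains to assemble the two parts. The diagonal part contributes $\sum_n\int\abs{\psi\uppar n(p,K)}^2\,I(p,\Omega_\mu(K))\,\ud p\,\ud K\le\big(\sup_{p,\,M\ge\mu}I(p,M)\big)\norm\psi^2$, which is $\le C(\mu)\norm\psi^2$ with $C(\mu)\to0$; for $\kappa=0$ this is essentially the whole bound, and one takes $\mu_0$ so large that the constant is $<1$. The off-diagonal part is the delicate one: I would estimate its $v\otimes\bar v$ double integral by a Cauchy--Schwarz (Schur) argument — splitting the resolvent power $L_\mu^{2\kappa-2}$ symmetrically between the two momenta $k_j,k_{j'}$, using the lower bound $L_\mu(p,K)\ge\Omega_\mu(\hat K_j)+\omega(k_j)$ together with the quadratic term revealed by the change of variables, and, crucially, inserting compensating factors $\Omega_\mu(\hat K_j)^{\pm\kappa}$ onto the two copies of $\psi$ — so as to reduce the estimate once more to the integral $I$, now paired with $\Omega_\mu^\kappa\psi$. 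This last move is precisely where the hypothesis $\psi\in D(\Omega^\kappa)$, rather than merely $\psi\in\hilb$, is used, and it is the point at which the massless case genuinely departs from~\cite{nelsontype}: there the Fock-space combinatorial factor produced by the double sum over $j,j'$ could be absorbed directly into $L$ via $N\le L$, whereas here it must be absorbed into the extra regularity $\Omega^\kappa$. Making this combinatorial bookkeeping close, uniformly in the sector index $n$ and without recourse to $N\le L$, is the step I expect to be the main obstacle.
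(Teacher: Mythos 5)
Your reduction to the quadratic form $\langle\psi,a(V)L_\mu^{2\kappa-2}a^*(V)\psi\rangle$ and your treatment of the diagonal ($j=j'$) part are sound and parallel the paper's term $\gamma_\ud$: the scalar integral $I(p,M)$ is indeed uniformly bounded and decays in $\mu$ exactly under $\kappa<(2-D)/4$, and this piece alone needs only $\norm{\psi}$. The problem is that the off-diagonal ($j\neq j'$) part — which you yourself flag as ``the step I expect to be the main obstacle'' — is precisely the substance of the lemma, and your sketch for it does not close. Bounding the cross terms by Cauchy--Schwarz/AM--GM with compensating factors $\Omega_\mu(\hat K_j)^{\pm\kappa}$ does not remove the combinatorial factor: after symmetrising, each fixed $j$ is paired with $\sim n$ values of $j'$, and a weight that is constant in $k_{j'}$ (or a power of the total $\Omega$, as $\Omega^{\pm\kappa}$ is) leaves you with a factor of order $n$, i.e.\ the number operator, which in the massless case is controlled neither by $L$ nor by the hypothesis $\psi\in D(\Omega^\kappa)$. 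The failure is starkest at $\kappa=0$, which is exactly the case you need for the second assertion (operator norm of $G_\mu$ on $\hilb$ smaller than $1$ for large $\mu$, later used to invert $1-G_\mu$): there your ``extra regularity $\Omega^\kappa$'' is vacuous, yet the cross terms are still there.

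The paper's resolution is to insert a weight depending on the individual boson momentum, $\omega(k_\nu)^s/\omega(k_j)^s$ with $s\geq1$, in the finite-dimensional Cauchy--Schwarz step; then $\sum_\nu\omega(k_\nu)^s\leq\omega(k_j)^s+\Omega(\hat K_j)^s$ converts the counting into powers of $\Omega$, and after integrating out $k_1$ by scaling one is left with $\Omega_\mu^{\kappa-(1+u(s)-s)/2}$ acting on $\psi$. Making this work requires choosing $s$ (possibly $s>1$, e.g.\ $s=1/\beta$ or $(2-D-\eps)/(2-\beta)$) so that simultaneously $u(s)<1$ and $2\kappa<1+u(s)$ (convergence of the $k_1$-integral after the extra $\omega(k_1)^{-s}$ is inserted) and $1+u(s)-s\geq\eps>0$ (so that the leftover exponent is $\leq\kappa-\eps/2$, giving the $\mu^{-\eps/2}$ smallness). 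That this choice is possible for all admissible $(\beta,D)$ is exactly the content of Lemma~\ref{lem:thetaass} — so your parenthetical remark that the ``exponent juggling'' of that lemma is not needed here is incorrect: it is invoked in this very proof, and it is the device that replaces $N\leq L$. Without supplying an argument of this kind (a $k_{j'}$-dependent, summable weight compatible with convergence of the momentum integral), your proposal does not prove either assertion of the lemma.
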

\begin{proof}
We will show that $\norm{L^\kappa G_\mu \psi} \leq C \norm{\Omega_\mu^{\kappa-(1+u(s)-s)} \psi}$ for some constant $C>0$ and any $s \geq 1$. In view of Lemma~\ref{lem:thetaass}, this proves the claim because 
\begin{align*}
 \norm{\Omega_\mu^{\kappa-\eps/2} \psi} \leq \mu^{-\eps/2} \norm{\Omega_\mu^{\kappa} \psi}  \leq \mu^{-\eps/2} \norm{ \psi}_{D(\Omega_\mu^{\kappa})} \, .
 \end{align*}
For later use, we will write $\Xi_\mu := L_\mu$ at first. To estimate $\abs{\Xi_\mu^\eta G_\mu \psi}^2$, we multiply by $\omega(k_j)^s/\omega(k_j)^s$ for $s \geq 1$ and use the finite dimensional Cauchy-Schwarz inequality:
  \begin{align*}
\abs{\Xi_\mu^\kappa G_\mu \psi \uppar n (p,K)}^2
&
\leq
 \sum_{j=1}^{n+1}  \sum_{\nu=1}^{n+1} \frac{\omega(k_\nu)^s }{n+1} \frac{\abs{ v(k_j)}^2 \Xi_\mu(p,K)^{2 \kappa} \abs{\psi \uppar n(p+ k_j,\hat{K}_j)}^2}{L_\mu(p,K)^{2} \omega(k_j)^s } 
 \\
 &
 \leq
 \sum_{j=1}^{n+1}  \frac{\omega(k_j)^s + \Omega(\hat{K}_j)^s}{n+1} \frac{\abs{ v(k_j)}^2 \Xi_\mu(p,K)^{2 \kappa} \abs{\psi \uppar n(p+ k_j,\hat{K}_j)}^2}{L_\lambda(p,K)^{2} \omega(k_j)^s } 
   \, .
 \end{align*}
 In the second step, the fact that $s \geq 1$ is essential. We now use the assumptions $\abs{v(k)} \leq  \abs{k}^{-\alpha}$ and $\omega(k) \geq \abs{k}^{\beta}$. This yields for the translated expression $\abs{\Xi_\mu^\kappa G_\mu\psi \uppar n (p- k_j,K)}^2$ the bound 
  \begin{align*}
\abs{\Xi_\mu^\kappa G_\mu\psi \uppar n (p- k_j,K)}^2  \leq & \sum_{j=1}^{n+1}  \frac{\abs{\psi \uppar n(p,\hat{K}_j)}^2}{ n+1} \frac{ \Xi_\mu(p-k_j,K)^{2 \kappa} \abs{k_j}^{-2\alpha-\beta s} \Omega(\hat{K}_j)^s}{ {L_\mu(p- k_j,K)^{2} }}  
%\label{eq:decompofkappa1}
\\
&
+
\sum_{j=1}^{n+1}  \frac{\abs{\psi \uppar n(p,\hat{K}_j)}^2}{n+1}   \frac{\Xi_\mu(p-k_j,K)^{2 \kappa} \abs{k_j}^{-2\alpha}} {L_\mu(p- k_j,K)^{2} }  \, .
%\label{eq:decompofkappa2}
 \end{align*}
Now we use the symmetry of $\psi$, $L$ and $\Xi$ to note that we can bound the integral over these sums by the integral over the first term of the sums times $n+1$. That is, we have a bound
\begin{align*}
\norm{\Xi_\mu^\kappa G_\mu \psi \uppar n}^2 &= \int \abs{\Xi_\mu^\kappa G_\mu\psi \uppar n (p- k_j,K)}^2 \ud K \ud p
\\
& \leq \int \abs{\psi \uppar n(p,\hat{K}_1)}^2 \int_{\R^3} \gamma_\ud^{\Xi_\mu} +  \gamma_\uod^{\Xi_\mu} \ \ud k_1 \ \ud \hat{K}_1 \ud p
\end{align*}
where
\begin{align}
\label{eq:defofgammas}
\gamma_\ud^{\Xi_\mu}(p,K) + \gamma_\uod^{\Xi_\mu}(p,K) :=  \frac{\Xi_\mu(p-k_1,K)^{2 \kappa} } {L_\mu(p- k_1,K)^{2} \abs{k_1}^{2\alpha}}  + \frac{ \Xi_\mu(p-k_1,K)^{2 \kappa}  \Omega(\hat{K}_1)^s}{ {L_\mu(p- k_1,K)^{2} }\abs{k_1}^{2\alpha+\beta s}}  .
\end{align}

We now specify to $\Xi_\mu=L_\mu$ and estimate it from below by $\abs{p-k_1}^2+\Omega_\mu(\hat{K}_1)$. Recall that since $D\geq 0$ we have by hypothesis $\kappa < 1/2$. So we can bound the integral over $k_1$ of the off-diagonal part by
  \begin{align*}
\int_{\R^3} \gamma_\uod^{L_\mu}(p,K) \, \ud k_1 \leq   \int_{\R^3}  \frac{ \Omega_\mu(\hat{K}_{1})^s  \abs{k_1}^{-2\alpha-\beta s}}{  (\abs{p-k_1}^2+\Omega_\mu(\hat{K}_1))^{2(1-\kappa)}  }   \ud k_1 \, .
 \end{align*}
% We would now like to choose $\Xi_\lambda = L_\lambda$ resulting in the two terms
%  \begin{align}
%& \sum_{j=1}^{n+1}  \frac{\abs{\psi \uppar n(P,\hat{K}_j)}^2}{n+1} \left(\frac{  \Omega_\lambda(\hat{K}_j) \abs{k_j}^{-2\alpha-\beta} }{ {L_\lambda(p- k_j,K)^{2(1-\eta)} }}   +  \frac{ \abs{k_j}^{-2\alpha}} {L_\lambda(p- k_j,K)^{2(1-\eta)} } \right) \, .
%\nonumber 
%%\label{eq:decompofkappaL}
% \end{align}
% The integral over $k_1$ of the first term 
If $u(s)<1$ and $2 \kappa < u(s) +1$, this integral is by scaling bounded by a constant times
 \begin{align*}
 \Omega_\mu(\hat{K}_1)^{s+2(\kappa-1)+\frac{3-2\alpha-\beta s}{2}} =  \Omega_\mu(\hat{K}_1)^{2 \left(\kappa-\frac{1+u(s)-s}{2} \right)} \, .
 \end{align*}
If $2 \kappa < u(0) +1$, we obtain similarly for some $C>0$ a bound for the diagonal part: 
 \begin{align*}
\int_{\R^3} \gamma_\uod^{L_\mu}(p,K) \, \ud k_1 \leq C \Omega_\mu(\hat{K}_1)^{2(\kappa-1)+\frac{3-2\alpha}{2}} = C \Omega_\mu(\hat{K}_1)^{2 \left(\kappa-\frac{1+u(0)}{2} \right)} \, .
 \end{align*}
Because $\beta>0$, the function $u$ is increasing so the hypothesis $2 \kappa < u(0) +1 = (2-D)/2$ clearly implies $2 \kappa < u(s) +1$. In addition $\beta \leq 2$, so we can estimate $1+u(s)-s \leq 1+u(0)$. \qed
\end{proof}

The next lemma deals with the most important step of the construction, namely the mapping properties of $G_\mu$ into $D(\Omega^\eta)$. It is only here (because more explicit computations are used) where the fact that the dimension is equal to three is relevant.

\begin{lem}
\label{lem:GintoOmega}
Let $ \beta \in (0,2]$, let $0 \leq D < \beta$ if $\beta <2$ and $0<D<2$ if $\beta =2$. Assume that there exists $\eps_\uod >0$ small enough such that
\begin{align*}
0 \leq  
 \eta \leq  \begin{cases}  \frac{\frac{\beta(2-D-\eps_\uod)}{2-\beta} - D +1}{2\beta} & \ D > \frac{3 \beta -2}{\beta} - \eps_\uod ,  \\
\frac{2- D}{2 \beta} & \ D \leq \frac{3 \beta -2}{\beta} - \eps_\uod \, . \end{cases} 
\end{align*}
Define for any $\eps_\ud \geq 0$ the map $q_{\eps_\ud}(\eta):=\max\left(0,\eta+\eps_\ud  - (\beta +2 -2 D) /(4 \beta)\right)$. Then for any $\mu\geq 1$ and any $\eps_\ud >0 $ it holds that $G_\mu$ is continuous from $D(\Omega^{\eta}) \cap D(L^{q_{\eps_\ud}(\eta)})$ to $D(\Omega^\eta)$ and there exists $\mu_0\geq 1$ such that the norm of $G_\mu$ as a map between these two spaces is smaller than $1$ for all $\mu > \mu_0$.
\end{lem}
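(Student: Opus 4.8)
\textbf{Proof proposal for Lemma~\ref{lem:GintoOmega}.}
The plan is to mimic the structure of the proof of Lemma~\ref{lem:GintoL}, reusing the pointwise identity for $\Xi_\mu^\eta G_\mu\psi\uppar n$ established there, but now with the crucial difference that the target operator $\Xi_\mu$ is taken to be $\Omega_\mu$ rather than $L_\mu$. Recall that in the earlier lemma we derived, after symmetrisation, the bound
\begin{align*}
\norm{\Xi_\mu^\eta G_\mu \psi \uppar n}^2 \leq \int \abs{\psi \uppar n(p,\hat{K}_1)}^2 \int_{\R^3} \gamma_\ud^{\Xi_\mu} +  \gamma_\uod^{\Xi_\mu} \ \ud k_1 \ \ud \hat{K}_1 \ud p \, ,
\end{align*}
with $\gamma_\ud^{\Xi_\mu}, \gamma_\uod^{\Xi_\mu}$ as in~\eqref{eq:defofgammas}. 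So the first step is simply to specialise that formula to $\Xi_\mu = \Omega_\mu$ and $\kappa$ replaced by $\eta$. The point is that $\Omega_\mu(p-k_1,K) = \Omega_\mu(K)$ does not depend on $p$ and equals $\omega(k_1) + \Omega_\mu(\hat K_1)$, which by the hypothesis $\abs k^\beta \le \omega(k) \le \abs k^\beta + \mathsf m$ is comparable to $(\abs{k_1}^\beta + \Omega_\mu(\hat K_1))$. This makes the numerator $\Omega_\mu(p-k_1,K)^{2\eta}$ much easier to handle than $L_\mu^{2\eta}$, at the cost of the $p$-integration now being genuinely three-dimensional in the denominator.

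Second, I would carry out the two $k_1$-integrals. For the \emph{off-diagonal} part one must bound
\begin{align*}
\int_{\R^3} \frac{(\abs{k_1}^\beta+\Omega_\mu(\hat K_1))^{2\eta}\,\Omega_\mu(\hat K_1)^s\,\abs{k_1}^{-2\alpha-\beta s}}{(\abs{p-k_1}^2 + \abs{k_1}^\beta + \Omega_\mu(\hat K_1))^{2}}\,\ud k_1 \, .
\end{align*}
Splitting the region $\abs{k_1}^\beta \lesssim \Omega_\mu(\hat K_1)$ from its complement, using $2\alpha = 1-D$, and performing the resulting elementary radial integrals (in three dimensions, so with an $\abs{k_1}^2$ Jacobian) should yield a power of $\Omega_\mu(\hat K_1)$ of the form $\Omega_\mu(\hat K_1)^{2(\eta - (1+u(s)-s)/2)}$ exactly as before, \emph{provided} the exponent conditions hold — and this is where the first branch of the hypothesis on $\eta$, together with the choice $s = \theta_{\eps_\uod}$ from Lemma~\ref{lem:thetaass}, comes in: Lemma~\ref{lem:thetaass} guarantees $u(\theta_{\eps_\uod}) < 1$ and $1 + u(\theta_{\eps_\uod}) - \theta_{\eps_\uod} \ge \eps_\uod$, which furnishes the convergence of the $k_1$-integral and the decay factor $\mu^{-\eps_\uod}$ needed for the norm-smallness claim. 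For the \emph{diagonal} part $\gamma_\ud^{\Omega_\mu}$ there is no $\Omega_\mu(\hat K_1)^s$ factor, so one gets only $\abs{k_1}^{-2\alpha}$ in the numerator; the radial integral then demands a genuine additional $L_\mu$-regularity of $\psi$ to absorb a leftover power of $p^2$, and this is precisely the role of the map $q_{\eps_\ud}(\eta)$ and of the target space $D(L^{q_{\eps_\ud}(\eta)})$. One should check that $q_{\eps_\ud}(\eta) = \max(0, \eta + \eps_\ud - (\beta+2-2D)/(4\beta))$ is exactly the exponent for which $\norm{L^{q_{\eps_\ud}(\eta)}\psi}$ controls the diagonal contribution, with again a spare $\mu^{-\eps_\ud}$.

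Third, after both integrals are bounded by constants times $\Omega_\mu(\hat K_1)^{2\eta - \eps_\uod} + (\text{terms controlled by } L^{q_{\eps_\ud}(\eta)})$, one reassembles: integrating back over $\hat K_1$ and $p$ and using symmetry of $\psi$ gives
\begin{align*}
\norm{\Omega_\mu^\eta G_\mu \psi}^2 \lesssim \mu^{-\eps_\uod}\norm{\Omega_\mu^\eta \psi}^2 + \mu^{-\eps_\ud}\norm{L^{q_{\eps_\ud}(\eta)}\psi}^2 \, ,
\end{align*}
which is continuity from $D(\Omega^\eta)\cap D(L^{q_{\eps_\ud}(\eta)})$ to $D(\Omega^\eta)$, and choosing $\mu_0$ large makes the right-hand side strictly less than $\norm\psi^2$ on the intersection space, giving the norm bound. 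The main obstacle I anticipate is the diagonal term: in the massive lemma the diagonal integral was harmless because $L_\mu^{2\kappa}$ in the numerator could be paired against the $L_\mu^2$ in the denominator and the leftover controlled by $\Omega_\mu$ alone, whereas here the numerator is only $\Omega_\mu^{2\eta}$, so an uncompensated factor $(\abs{p-k_1}^2+\dots)$ survives the $k_1$-integration and one is forced to spend $L$-regularity — getting the bookkeeping of the exponent $q_{\eps_\ud}(\eta)$ right, and verifying it is nonnegative and consistent with Condition~\ref{maincond} when this lemma is later combined with Lemma~\ref{lem:GintoL}, is the delicate part. A secondary subtlety is making sure the case distinction in the hypothesis on $\eta$ matches the case distinction in $\theta_\eps$ from Lemma~\ref{lem:thetaass}, so that the off-diagonal exponent inequality $2\eta < u(\theta_{\eps_\uod}) + 1$ (needed for radial convergence) is exactly the stated bound on $\eta$.
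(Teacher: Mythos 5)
Your overall architecture matches the paper's: start from \eqref{eq:defofgammas} with $\Xi_\mu=\Omega_\mu$ and $\kappa$ replaced by $\eta$, treat diagonal and off-diagonal parts separately, use $s=\theta_{\eps_\uod}$ from Lemma~\ref{lem:thetaass} for the off-diagonal part, and spend $L$-regularity through $q_{\eps_\ud}(\eta)$ on the diagonal part. However, there is a genuine gap in your off-diagonal estimate. You claim that splitting the region $\abs{k_1}^\beta\lesssim\Omega_\mu(\hat K_1)$ from its complement and performing ``elementary radial integrals'' yields the bound $C\,\Omega_\mu(\hat K_1)^{2\eta-(1+u(s)-s)}$ ``exactly as before'', i.e.\ uniformly in $p$, with the hypothesis on $\eta$ serving only to guarantee convergence of the $k_1$-integral. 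That is not what happens. Unlike in Lemma~\ref{lem:GintoL}, the numerator $(\abs{k_1}^\beta+\mathsf{M}+\Omega_\mu(\hat K_1))^{2\eta}$ is not dominated by the denominator in the region $k_1\approx p$ with $\abs p$ large: after rescaling by $\Omega_\mu(\hat K_1)^{1/2}$ and passing to spherical coordinates, the angular integration produces a factor $((r-\tilde p)^2+r^\beta\Omega^{\beta/2-1}+1)^{-1}$ whose radial integral near the pole $r=\tilde p$ is of order $\bigl[\tfrac12 p^{\beta-2}+\tilde p^{-2}\bigr]^{-1/2}$, i.e.\ it \emph{grows} like $(p^{(2-\beta)/2})^{1-t}\,\tilde p^{\,t}$ for $t\in[0,1]$. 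Mere convergence of the radial integral only requires $\eta\beta<1+u(s)$; the actual role of the first branch of the hypothesis on $\eta$ is the strictly stronger inequality $2\eta\beta\le 2u(\theta_{\eps_\uod})+1$, which makes the residual exponent $2\eta\beta-2(u(\theta_{\eps_\uod})+1)+1$ of $\tilde p$ (obtained with the choice $t=1$) non-positive. Without identifying and killing this residual $\tilde p$-power, the off-diagonal bound is simply not $p$-independent; you write down approximately the right inequality at the end but attribute it to the wrong mechanism, and the derivation of the extra power (the $-1/2$ from the one-dimensional pole integral, estimated via the arctangent antiderivative) is absent. This is the analytic heart of the lemma.

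Your treatment of the diagonal term is structurally correct: there the choice $t=0$ leaves a genuine power of $p$ that must be absorbed by $L$-regularity, and $q_{\eps_\ud}(\eta)$ together with a Young inequality with exponents $2/(2-\beta)$ and $2/\beta$ is exactly what the paper uses to split the mixed term $\Omega_\mu(\hat K_1)^{\eta(2-\beta)}(p^2+\mu)^{\beta q_{\eps_\ud}(\eta)}$ into the two norms. But since you do not actually carry out the $k_1$-integral, the same missing pole analysis affects this part as well, so the bookkeeping that you yourself flag as ``the delicate part'' is left undone.
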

\begin{proof}
To estimate the norm of $\Omega_\mu^\eta G_\mu \psi \uppar n$, we start directly with the expressions $\gamma_\ud^{\Omega^\eta}$ and $\gamma_\uod^{\Omega^\eta}$ as they have been defined in \eqref{eq:defofgammas}. Note that we have replaced the exponent $\kappa$ by $\eta$. By defining the rescaled variables $\tilde p := p/\Omega_\lambda^{1/2}$ and $ \tilde k := k_1 /\Omega_\lambda^{1/2}$ we can estimate
\begin{align*}
 \int_{\R^3} \gamma_\ud^{\Omega^\eta}(p,K) \, \ud k_1 
&
\leq  \int_{\R^3}  \frac{\left(\abs{k_1}^\beta + \mathsf{m} + \Omega_\mu(\hat{K}_1) \right)^{2 \eta}  \abs{k_1}^{-2\alpha}}{  (\abs{p-k_1}^2+   \abs{k_1}^\beta+\Omega_\mu(\hat{K}_1))^{2}  } \,  \ud k_1 \\
%&
%\leq
%c^{-s}    \abs{\psi \uppar n(p,\hat{K}_1)}^2 \Omega_\mu(\hat{K}_{1})^{2 (\eta-1) - \frac{{2\alpha-3}}{2}} 
%\\
%&
%\times  \int_{\R^d}  \frac{\left(\abs{k_1}^\beta \Omega_\mu(\hat{K}_1)^{-\beta/2} + \mathsf{m} + 1 \right)^{2 \eta}  (\abs{k_1}  \Omega_\mu(\hat{K}_1)^{-1/2})^{-2\alpha}}{  \left(\abs{(p-k_1)/\Omega_\mu(\hat{K}_1)^{1/2}}^2+  c \Omega_\mu(\hat{K}_1)^{\beta/2-1} \abs{k_1}^\beta \Omega_\mu(\hat{K}_1)^{-\beta/2} +1 \right)^{2}  }   \ud (k_1 \Omega_\mu(\hat{K}_1)^{-1/2}) 
%\\
&
=
%\abs{\psi \uppar n(p,\hat{K}_1)}^2 
\Omega_\mu(\hat{K}_{1})^{2 \eta  - (u(0)+1)} 
%\\
%&
%\times 
 \int_{\R^3}  \frac{\left(\abs{\tilde k}^\beta  + \mathsf{m} + 1 \right)^{2 \eta}  \abs{\tilde k}^{-2\alpha}}{  \left(\abs{\tilde p - \tilde k}^2+   \Omega_\mu(\hat{K}_1)^{\frac{\beta-2}{2}} \abs{\tilde k}^\beta +1 \right)^{2}  }   \ud \tilde k \, .
\end{align*}
In the very same way we obtain for the integral over $k_1$ of the off-diagonal part in \eqref{eq:defofgammas} the upper bound 
\begin{align*}
\Omega_\mu(\hat{K}_{1})^{2 \eta  - (1+u(s)-s)}  \int_{\R^3}  \frac{\left(\abs{\tilde k}^\beta  + \mathsf{m} + 1 \right)^{2 \eta}  \abs{\tilde k}^{-2\alpha-\beta s}}{  \left(\abs{\tilde p - \tilde k}^2+   \Omega_\mu(\hat{K}_1)^{\frac{\beta-2}{2}} \abs{\tilde k}^\beta +1 \right)^{2}  }   \ud \tilde k \, .
\end{align*}
Abbreviate $\Omega:=\Omega_\mu(\hat{K}_1)$, set $\mathsf{M} := \mathsf{m} + 1 \in (0,\infty)$ and denote the remaining integral by
\begin{align}
\label{eq:upsilon}
\Upsilon(s,\mu,\tilde p):=\int_{\R^3}  \frac{\left(\abs{\tilde k}^\beta  + \mathsf{M} \right)^{2 \eta}  \abs{\tilde k}^{-2\alpha-\beta s}}{  \left(\abs{\tilde p - \tilde k}^2+   \Omega^{\beta/2-1} \abs{\tilde k}^\beta +1 \right)^{2}  }   \ud \tilde k \, .
\end{align}
 
The integral $\Upsilon$ is clearly bounded for any $\tilde{p} \in \R^3$ as long as $\eta < \frac{1+u(s)}{\beta}$ and $u(s)<1$. If $\abs{ \tilde{p}} \leq 1$, we therefore estimate it simply by a constant. So assume in the following that $\abs{ \tilde{p}} > 1$ and compute using spherical coordinates 
  \begin{align*}
\Upsilon&(s,\mu,\tilde p)
= 2 \pi \int_0^\infty \int_{-1}^1 \frac{(r^\beta+\mathsf{M})^{2\eta} r^{2-2\alpha-\beta s}}{ (r^2 + \tilde{p}^2 -2 r \tilde{p} \sigma +r^\beta \Omega^{\frac{\beta-2}{2}} +1)^2} \, \ud r \ud \sigma
\\
&
= 2 \pi \int_0^\infty  \frac{(r^\beta+\mathsf{M})^{2\eta} r^{2-2\alpha-\beta s}}{ ((r- \tilde{p})^2 + r^\beta \Omega^{\frac{\beta-2}{2}}+1)((r+ \tilde{p})^2  +r^\beta \Omega^{\frac{\beta-2}{2}} +1)} \, \ud r 
\\
&
\leq 
%2 \pi \int_0^\infty  \frac{\tilde p^{2 \eta \beta+2-2\alpha-\beta s+1-4} ( x^\beta + \mathsf{M} \tilde p^{-\beta})^{2\eta}  x^{2-2\alpha-\beta s}}{ ((x- 1)^2 +x^\beta /(\Omega^{1/2} \tilde{p})^{2-\beta}  +\tilde{p}^{-2})((x+ 1)^2 + x^\beta /(\Omega^{1/2} \tilde{p})^{2-\beta}   +\tilde{p}^{-2} )} \ud  x
%\\
%&
%= 
2 \pi  (\tilde p^2)^{ \eta \beta -(u(s)+1)} \int_0^\infty  \frac{ ( x^\beta +\mathsf{M})^{2\eta} + x^{2-2\alpha-\beta s}}{ ((x- 1)^2 x^\beta p^{\beta-2}   +\tilde{p}^{-2})((x+ 1)^2 + x^\beta p^{\beta-2}   +\tilde{p}^{-2} )} \ud  x \, .
 \end{align*} 
We have replaced $\mathsf{M}/\tilde{p}^\beta$ simply by $\mathsf{M}$ because $\abs{ \tilde{p}} > 1$. The integral from $x=2$ to infinity is bounded by a constant, independent of $\tilde p $, for any $\eta < \frac{1+u(s)}{\beta}$. The same is true of the integral from zero to $x=2^{-1/\beta}$. Consider the integral from $2^{1/\beta}<1$ to $2$. On this interval, the numerator of the integral can be estimated by a constant that depends on $\mathsf{M}$, the factor in the denominator that contains the $(x+1)^2$-term is bounded from below by one. It remains to estimate the factor which has a pole at $x=1$. This can be done by enlarging the domain and making use of fact that the antiderivative of $(1+x^2)^{-1}$ is the $\arctan$. So we have
   \begin{align*}
% \int_0^{3/2} & \frac{ ( x^\beta +\mathsf{M})^{2\eta}  x^{2-2\alpha-\beta s}}{ ((x- 1)^2 + x^\beta p^{\beta-2}   +\tilde{p}^{-2})((x+ 1)^2 + x^\beta p^{\beta-2}   +\tilde{p}^{-2} )} \ud  x
% \\ 
% &
%\leq C(\mathsf{M}) 
\int_{2^{-1/\beta}}^2 &\frac{1}{ ((x- 1)^2 + x^\beta p^{\beta-2}  +\tilde{p}^{-2})} \ud  x  
\leq \int_{2^{-1/\beta}}^2 \frac{1}{ ((x- 1)^2 + 1/2  {p}^{\beta-2}  +\tilde{p}^{-2})} \ud  x  
\nonumber \\
&
\leq  \int_{\R} \frac{1}{ ((x- 1)^2 +1/2  {p}^{\beta-2}   +\tilde{p}^{-2})} \ud  x  
= \pi \left[1/2  {p}^{\beta-2}   +\tilde{p}^{-2}  \right]^{-1/2} 
%\nonumber  \\ 
%&
%\leq
%C ( {p}^{\frac{2-\beta}{2}})^{(1-t)} (\tilde{p})^{t}
  \, .
 \end{align*} 
Recall that the other parts of this integral are bounded by a constant. So, because $\tilde{p} >1$ implies ${p} >1$, we can bound as a whole:
\begin{align}
\label{eq:boundwitht}
\chi_{\{\tilde{p} >1\}} \, \Upsilon(s,\mu,\tilde p) 
&
\leq \chi_{\{\tilde{p} >1\}}\left( C +  \left[1/2  {p}^{\beta-2}   +\tilde{p}^{-2}  \right]^{-1/2}\right)
 \nonumber \\
&
\leq C' \chi_{\{\tilde{p} >1\}} ( {p}^{\frac{2-\beta}{2}})^{(1-t)} (\tilde{p})^{t}  \, .
\end{align}
Here we have introduced a parameter $t \in [0,1]$. 
%That means that the integral is for $\tilde{p} >1$ bounded by a constant times
%\begin{align}
%\label{eq:generaldodbound}
%\Omega_\mu(\hat{K}_{1})^{2 \eta  - (u(s)+1-s)}    \tilde p^{2 \eta \beta -2(u(s)+1)+t} p^{\frac{(2-\beta)(1-t)}{2}} \, .
%% = \Omega_\mu(\hat{K}_{1})^{2 \eta - \eta \beta  +s}      p^{2 \eta \beta -2(u(s)+1)+ \frac{2-\beta}{2}} 
%\end{align}
Now we have to distinguish between the diagonal term in~\eqref{eq:defofgammas}, where we have $s=0$ and choose $t=0$ in~\eqref{eq:boundwitht}, and the off-diagonal term  where we choose $t=1$ in~\eqref{eq:boundwitht} and observe that $s \geq 1$ is required. The off-diagonal term hence can be bounded by
\begin{align*}
 \int_{\R^3} \gamma_\uod^{\Omega^\eta}(p,K) \, \ud k_1 \leq C \Omega_\mu(\hat{K}_{1})^{2 \eta  - (u(s)+1-s)} \left( \chi_{\{\tilde p \leq 1 \}} +  \chi_{\{\tilde p > 1\}}   \tilde p^{2 \eta \beta -2(u(s)+1))+1}\right)   \, .
\end{align*}
We would like to have -- for the off-diagonal term -- a bound independent of $p$. To achieve this, we apply Lemma~\ref{lem:thetaass} and choose $s=\theta_{\eps_\uod}$ for an ${\eps_\uod}>0$ admissible there. Then we can see that our upper bounds on $\eta$ are such that the exponent of $\tilde p$ is non-positive. This is because for $s=\theta_{\eps_\uod}$ the exponent becomes
\begin{align*}
2 \eta \beta -2(u(\theta_{\eps_\uod})+1))+1 
%&=2 \eta \beta - (\beta s - D + 1) 
%\\
%&
&
= 2 \beta  \eta - 2 \beta \begin{cases}  \frac{\frac{\beta(2-D)-\beta {\eps_\uod}}{2-\beta} - D +1}{2\beta} &  \ D > \frac{3 \beta -2}{\beta} - {\eps_\uod} \\
\frac{\beta \max(1,1/\beta)- D+1}{2 \beta} &  \ D \leq \frac{3 \beta -2}{\beta} -{\eps_\uod} \, , \end{cases}
%\\
%&
%\leq  2 \beta  \eta - 2 \beta \begin{cases}  \frac{\frac{\beta(2-D-\eps)}{2-\beta} - D +1}{2\beta} & \ D > \frac{3 \beta -2}{\beta} - \eps ,  \\
%\frac{2- D}{2 \beta} & \ D \leq \frac{3 \beta -2}{\beta} - \eps \, . \end{cases} 
\end{align*}
and obviously $1 \leq \beta \max(1,1/\beta)$. These considerations imply that the norm of the off-diagonal term is bounded by $\norm{ \Omega_\mu^{\eta-{\eps_\uod}/2} \psi}^2 \leq \mu^{-{\eps_\uod}} \norm{ \Omega_\mu^{\eta} \psi}$.

We are not able to obtain a bound independent of $p$ also for the diagonal term in \eqref{eq:defofgammas}. Setting $s=t=0$ in \eqref{eq:boundwitht}, yields for the integral $\int_{\R^3} \gamma_\ud^{\Omega^\eta}(p,K) \, \ud k_1$ a bound of the form constant times
\begin{align*}
% &\int_{\R^3} \gamma_\ud^{\Omega^\eta}(p,K) \, \ud k_1
% \\
% &
 &\Omega_\mu(\hat{K}_{1})^{2 \eta  - (u(0)+1)} \left( \chi_{\{\tilde p \leq 1 \}} +  \chi_{\{\tilde p > 1\}}  \Omega_\mu(\hat{K}_1)^{- \eta \beta +(u(0)+1)}  p^{2 \eta \beta -2(u(0)+1)+\frac{2-\beta}{2}} \right) 
 \\
 &
=
\Omega_\mu(\hat{K}_{1})^{2 \eta  - (u(0)+1)} \chi_{\{\tilde p \leq 1 \}}  +  \chi_{\{\tilde p > 1\}} \Omega_\mu(\hat{K}_{1})^{2 \eta  -  \eta \beta }  p^{2 \beta \left(\eta - (\beta +2 -2 D) /(4 \beta)\right)}
 \, .
\end{align*}
Due to the fact that $D<\beta \leq 2$, the first term here is bounded by $\mu^{-u(0)-1} \Omega_\mu^{2 \eta}$ for all $\tilde p \in [0, \infty)$. To bound the second term, introduce an $\eps_\ud > 0$, which yields
\begin{align*}
 \chi_{\{\tilde p > 1\}} \Omega_\mu & (\hat{K}_{1})^{2 \eta  -  \eta \beta }  p^{2 \beta \left(\eta - (\beta +2 -2 D) /(4 \beta)\right)}
 \\
 &
 \leq \chi_{\{\tilde p > 1\}} \Omega_\mu  (\hat{K}_{1})^{\eta (2 -  \beta)  } \mu^{-\eps_\ud \beta}  (p^2+\mu)^{\beta \left(\eta + \eps_\ud - (\beta +2 -2 D) /(4 \beta)\right) }
 \\
 &
  \leq  \Omega_\mu  (\hat{K}_{1})^{ \eta (2 -  \beta) } \mu^{-\eps_\ud \beta}  (p^2+\mu)^{{\beta} q_{\eps_\ud}(\eta)} 
  \end{align*}
  We have used in particular that $\mu \geq 1$ to get rid of the characteristic function. Now we apply Young's inequality with $\nu=2/(2-\beta)$ and $\xi = 2/\beta$, which leads to the upper bound
  \begin{align*}
 C \mu^{-\eps \beta} \left( \Omega_\mu(\hat{K}_{1})^{2 \eta   } +  (p^2+\mu)^{ 2 q_{\eps_\ud}(\eta)} \right)
 \, .
\end{align*}
 Because $\beta >0$, the norm of this term goes to zero as $\mu \rightarrow \infty$. This proves the claim. \qed
\end{proof}

The Neumann series is a candidate for the inverse of the operator $1-G_\mu$. On domains where the norm of $G_\mu$ is decreasing, the series will converge for large enough $\mu$.
\begin{cor}
\label{cor:1minusG}
Let $ \beta \in (0,2]$, let $0 \leq D < \beta$ if $\beta <2$ and $0<D<2$ if $\beta =2$. Let $\eta, \kappa \geq 0$. Assume that for any $\eps >0$ small enough
\begin{align}
\label{eq:completacond}
0 \leq  
 \eta <  \begin{cases}  \frac{\frac{\beta(2-D-\eps)}{2-\beta} - D +1}{2\beta} & \ D > \frac{3 \beta -2}{\beta} - \eps ,  \\
\frac{2- D}{2 \beta} & \ D \leq \frac{3 \beta -2}{\beta} - \eps  \end{cases} 
\end{align}
and $\max(\kappa,q_0(\eta)) < \frac{2-D}{4}$. Then there exists $\mu_0 \geq 1$ such that $1-G_\mu$ is continuously invertible on $D(\Omega^\eta)\cap D(L^{\max(\kappa,q_\eps(\eta))})$ for any $\mu>\mu_0$, possibly for a smaller $\eps>0$.
%\item If $\eta < \min(1,\frac{2-\D}{2 \beta})$ and $\mathfrak{L}_0(\eta) < \frac{2-D}{4}$ as well as $\kappa< \frac{2-D}{4}$ then, for $\lambda> \lambda_0$ we have that 
%\begin{align*}
%D(L_\lambda) \hookrightarrow  D(\Omega_\lambda^\eta)\cap D(L_\lambda^{\max(\kappa,\mathfrak{L}_\eps(\eta))}) \xrightarrow{(1-G_\lambda)^{-1}} D(\Omega_\lambda^\eta)\cap D(L_\lambda^{\max(\kappa,\mathfrak{L}_\eps(\eta))})  \, .
%\end{align*}
 \end{cor}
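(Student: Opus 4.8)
The plan is to combine the two mapping lemmas to get a norm bound for $G_\mu$ on the intersection space $D(\Omega^\eta)\cap D(L^{\max(\kappa,q_\eps(\eta))})$ that tends to $0$ as $\mu\to\infty$, and then invoke the Neumann series. First I would fix $\eps>0$ small enough that the hypothesis~\eqref{eq:completacond} holds with this $\eps$ and that both $\kappa$ and $q_\eps(\eta)$ are still strictly below $(2-D)/4$; this uses continuity of $q_\eps(\eta)$ in $\eps$ and the assumption $\max(\kappa,q_0(\eta))<(2-D)/4$, shrinking $\eps$ once if necessary, which is exactly the phrase ``possibly for a smaller $\eps$'' in the statement. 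Write $r:=\max(\kappa,q_\eps(\eta))$.

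Next I would check that $G_\mu$ maps $X:=D(\Omega^\eta)\cap D(L^{r})$ continuously into itself. For the $L^{r}$-component: since $r<(2-D)/4$, Lemma~\ref{lem:GintoL} gives that $G_\mu$ is continuous from $D(\Omega^{r})$ to $D(L^{r})$, and $X\hookrightarrow D(\Omega^{r})$ because either $r\le\eta$ (so $D(\Omega^\eta)\subset D(\Omega^{r})$ using monotonicity of $\Omega_\mu^x$ in $x$ for $\mu\ge1$) or $r=q_\eps(\eta)$ and we can interpolate, but in any case $D(\Omega^\eta)\cap D(L^{r})\subset D(\Omega^{r})$ via $\Omega_\mu\le L_\mu$. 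More importantly, the proof of Lemma~\ref{lem:GintoL} shows the \emph{quantitative} bound $\norm{L^{r}G_\mu\psi}\le C\mu^{-\eps'/2}\norm{\psi}_{D(\Omega^{r})}$ for a suitable $\eps'>0$, hence $\to0$. For the $\Omega^\eta$-component: the choice of $\eps$ makes $\eta$ satisfy the (non-strict) bound of Lemma~\ref{lem:GintoOmega} with $\eps_\uod=\eps$, and $r\ge q_\eps(\eta)$, so $X\subset D(\Omega^\eta)\cap D(L^{q_\eps(\eta)})$ and Lemma~\ref{lem:GintoOmega} yields continuity of $G_\mu\colon X\to D(\Omega^\eta)$ together with a norm bound going to $0$ as $\mu\to\infty$ (the $\mu^{-\eps_\uod}$ and $\mu^{-\eps_\ud\beta}$ factors in that proof). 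Summing the two component norms and using the sum-norm on $X$, we get $\norm{G_\mu}_{\mathcal L(X)}\to 0$, so there is $\mu_0\ge1$ with $\norm{G_\mu}_{\mathcal L(X)}<1$ for all $\mu>\mu_0$.

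Finally, for $\mu>\mu_0$ the Neumann series $\sum_{j\ge0}G_\mu^{\,j}$ converges in $\mathcal L(X)$ and provides a bounded inverse of $1-G_\mu$ on $X$; thus $1-G_\mu$ is continuously invertible on $D(\Omega^\eta)\cap D(L^{r})$ with $r=\max(\kappa,q_\eps(\eta))$, which is the claim. The only genuine obstacle is bookkeeping: making sure a \emph{single} $\eps$ works simultaneously for (i) the strict inequality~\eqref{eq:completacond}, (ii) the constraint $q_\eps(\eta)<(2-D)/4$, and (iii) the admissibility of $\eps_\uod=\eps$ in Lemma~\ref{lem:GintoOmega} (i.e. $0<\eps<\eps_0$ with $D+\eps_0=\beta$); all three are open conditions satisfied at $\eps=0^+$ given the hypotheses, so a small enough $\eps$ suffices, and the decay in $\mu$ is then automatic from the two lemmas. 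No delicate new estimate is needed here — the corollary is essentially an assembly of Lemmas~\ref{lem:thetaass}, \ref{lem:GintoL} and~\ref{lem:GintoOmega}.
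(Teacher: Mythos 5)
Your proposal is correct and follows essentially the same route as the paper: shrink $\eps$ so that $\max(\kappa,q_\eps(\eta))<\tfrac{2-D}{4}$, apply Lemma~\ref{lem:GintoL} for the $D(L^{\max(\kappa,q_\eps(\eta))})$-component and Lemma~\ref{lem:GintoOmega} for the $D(\Omega^\eta)$-component (using $\Omega_\mu\leq L_\mu$ and monotonicity in the exponent to embed the intersection space into the respective source spaces), obtaining $\norm{G_\mu}<1$ on the intersection for large $\mu$ and concluding by Neumann series. No substantive differences.
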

 \begin{proof}
We make $\eps>0$ possibly smaller, such that also $\max(\kappa,q_\eps(\eta)) < \frac{2-D}{4}$. Then Lemma~\ref{lem:GintoL} implies that for any $\eta\geq 0$ it holds that
\begin{align*}
\norm{G_\mu \psi}_{D(L^{\max(\kappa,q_\eps(\eta))})} \leq c(\mu) \norm{\psi}_{D(\Omega^{\max(\kappa,q_\eps(\eta))})} \leq c(\mu) \norm{\psi}_{D(\Omega^\eta)\cap D(L^{\max(\kappa,q_\eps(\eta))})} 
\end{align*}
with $c(\mu) <1$ for $\mu$ larger than some $\mu_0 \geq 1$. Due to the assumptions we have made on $\eta$, the Lemma~\ref{lem:GintoOmega} gives
\begin{align*}
\norm{G_\mu \psi}_{D(\Omega^\eta)} \leq C(\mu) \norm{\psi}_{D(\Omega^\eta)\cap D(L^{q_\eps(\eta)})} \leq C(\mu) \norm{\psi}_{D(\Omega^\eta)\cap D(L^{\max(\kappa,q_\eps(\eta))})} 
\end{align*}
with $C(\mu)<1$ if $\mu > \mu_0$ for some $\mu_0 \geq 1$. The last inequality simply holds because $\mu \geq 1$ and $q_\eps(\eta)\leq \max(\kappa,q_\eps(\eta))$. \qed
 \end{proof}
 
 We are now ready to prove that the "free`` operator $H_0^\mu:=(1-G_\mu)^* L_\mu (1-G_\mu)$ is self-adjoint. To prove self-adjointness of the whole operator $H$ in Section~\ref{subsectproofsmain}, the operator $T^\mu$ will be regarded as an operator perturbation of $H_0^\mu$.
\begin{cor}
\label{cor:Hnull}
Let $ \beta \in (0,2]$, let $0 \leq D < \beta$ if $\beta <2$ and $0<D<2$ if $\beta =2$. Then $H_0^\mu$ is self-adjoint and positive on $D(H_0^\mu)= \mathfrak{D}=\{ \psi \in \hilb \vert (1-G_\mu) \psi \in D(L) \text{ for some } \mu>0 \}$.
\end{cor}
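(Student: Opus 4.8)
The entire analytic content of the corollary is already packed into Corollary~\ref{cor:1minusG}; what remains is a soft functional-analytic argument. First I would fix $\mu$ large. The choice $\eta=\kappa=0$ is admissible in Corollary~\ref{cor:1minusG} --- its hypotheses then reduce to the standing assumption on $(\beta,D)$ together with $0<(2-D)/4$ and $q_\eps(0)=0$ for small $\eps$ --- so there is $\mu_0\ge 1$ such that $B_\mu:=1-G_\mu$ is continuously invertible on $\hilb=D(\Omega^{0})\cap D(L^{0})$ for all $\mu>\mu_0$. For such $\mu$ the operator $L_\mu$ is self-adjoint with $L_\mu\ge\mu>0$, so $L_\mu^{-1}$ is bounded, self-adjoint and strictly positive. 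By the definition of the operator product, $D(H_0^\mu)=\{\psi\in\hilb\mid B_\mu\psi\in D(L_\mu)\}$; since $D(L_\mu)=D(L)$ and the family $\mathfrak{D}_\mu$ was shown in Section~\ref{sec:model} to be independent of $\mu>0$, this domain is exactly $\mathfrak{D}$.

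Next, for $\mu>\mu_0$, self-adjointness is obtained by passing to the inverse. Put $R_\mu:=B_\mu^{-1}L_\mu^{-1}(B_\mu^{-1})^{*}$, a bounded operator on $\hilb$. Taking the adjoint reverses the order of the three factors and leaves the self-adjoint middle one unchanged, so $R_\mu=R_\mu^{*}$; moreover $\langle R_\mu\phi,\phi\rangle=\langle L_\mu^{-1}(B_\mu^{-1})^{*}\phi,(B_\mu^{-1})^{*}\phi\rangle>0$ for $\phi\neq 0$, since $L_\mu^{-1}>0$ and $(B_\mu^{-1})^{*}$ is injective. Hence $R_\mu$ is bounded, self-adjoint and strictly positive, thus injective with dense range, and $\mathrm{Ran}(R_\mu)=B_\mu^{-1}L_\mu^{-1}\hilb=B_\mu^{-1}D(L_\mu)=\mathfrak{D}$ (in particular $\mathfrak{D}$ is dense). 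Consequently $R_\mu^{-1}$ is a positive self-adjoint operator with domain $\mathrm{Ran}(R_\mu)=\mathfrak{D}$. A one-line computation identifies it with $H_0^\mu$: for $\psi\in\mathfrak{D}$ one has $R_\mu H_0^\mu\psi=B_\mu^{-1}L_\mu^{-1}(B_\mu^{-1})^{*}B_\mu^{*}L_\mu B_\mu\psi=B_\mu^{-1}L_\mu^{-1}L_\mu B_\mu\psi=\psi$, using $(B_\mu^{-1})^{*}B_\mu^{*}=(B_\mu B_\mu^{-1})^{*}=1$; since $D(R_\mu^{-1})=\mathfrak{D}=D(H_0^\mu)$, this forces $H_0^\mu=R_\mu^{-1}$, so $H_0^\mu$ is self-adjoint and positive on $\mathfrak{D}$ for $\mu>\mu_0$.

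Finally, the restriction $\mu>\mu_0$ is removed by a bounded perturbation. Using the resolvent identity $1-G_\mu=(1-G_{\mu_1})-(\mu_1-\mu)L_\mu^{-1}G_{\mu_1}$ from Section~\ref{sec:model} and its adjoint, a short computation shows that on $\mathfrak{D}$ one has $H_0^\mu=H_0^{\mu_1}+W_{\mu,\mu_1}$ with $W_{\mu,\mu_1}=(\mu-\mu_1)(1-G_{\mu_1}^{*}G_{\mu_1})+(\mu-\mu_1)^{2}G_{\mu_1}^{*}L_\mu^{-1}G_{\mu_1}$, a bounded self-adjoint operator on all of $\hilb$. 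Choosing $\mu_1>\mu_0$ and invoking the Kato--Rellich theorem then yields self-adjointness of $H_0^\mu$ on $\mathfrak{D}$ for every $\mu>0$, while positivity is immediate from $\langle H_0^\mu\psi,\psi\rangle=\langle L_\mu B_\mu\psi,B_\mu\psi\rangle\ge 0$. I do not expect a genuine obstacle in this argument: all the hard estimates live in Lemma~\ref{lem:GintoOmega} and Corollary~\ref{cor:1minusG}, and the only points needing attention are the identification $D(H_0^\mu)=\mathfrak{D}$ (immediate from the parameter-independence established in Section~\ref{sec:model}) and keeping self-adjointness and strict positivity apart.
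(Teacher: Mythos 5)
Your proposal is correct and follows the same route as the paper: the only substantive input is Corollary~\ref{cor:1minusG} applied with $\eta=\kappa=0$ to get bounded invertibility of $1-G_\mu$ for large $\mu$, after which self-adjointness of $(1-G_\mu)^*L_\mu(1-G_\mu)$ on $\mathfrak{D}$ is soft functional analysis (the paper compresses your $R_\mu$-argument into ``it is easy to see that $\phi\in D((H_0^\mu)^*)$ implies $\phi\in D(H_0^\mu)$''). Your additional Kato--Rellich step extending the conclusion from $\mu>\mu_0$ to all $\mu>0$ via the bounded symmetric perturbation $W_{\mu,\mu_1}$ is sound and goes slightly beyond what the paper's proof records, which contents itself with large $\mu$.
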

\begin{proof}
Apply Corollary~\ref{cor:1minusG} with $\eta=\kappa=0$. This is possible because the upper bounds on $\eta$ and $\kappa$ are positive for $D<\beta$ and in addition $q_0(0) \leq 0$. That means that $(1-G_\mu)$ is invertible on $\hilb$ for $\mu\geq 1$ large enough, so $D(H_0^\mu):= \mathfrak{D}$ is dense in $\hilb$. The operator $H_0^\mu$ is clearly symmetric and positive and it is easy to see that $\phi \in D((H_0^\mu)^*)$ implies $\phi \in D(H_0^\mu)$. \qed
\end{proof}

 \subsection{The Domain $\mathfrak{D}$: Proof of Proposition~\ref{prop:reverse}}
 \label{proofofpropreverse}
In order to determine supersets for $\mathfrak{D}$, we can now build on the results of the previous section. The domain can be characterised as $\mathfrak{D} = (1-G_\mu)^{-1} D(L)$ for any $\mu\geq 1$ admissible in Corollary~\ref{cor:Hnull}. Therefore any subspace of the form $(1-G_\mu)^{-1} \mathcal{S}$ with $D(L) \hookrightarrow \mathcal{S} \subset \hilb$ is also a superset for $\mathfrak{D}$. If $1-G_\mu$ is invertible on $(\mathcal{S},\norm {\cdot}_\mathcal{S})$, we have  $(1-G_\mu)^{-1} \mathcal{S} = \mathcal{S}$, which then allows us to explicitly characterise this space. In this section, we will restrict the range of parameters to pairs where $D < \beta /2$ in contrast to $\beta$. In this way, the various conditions on $\eta$ can be significantly simplified.

 \begin{prop}
 \label{prop:infibounds}
Let $ \beta \in (0,2]$, let $0 \leq D < \beta/2$ if $\beta <2$ and $0<D<1$ if $\beta =2$. Define for any $\sigma \in (0,1]$ the subspace $\mathfrak{D}^\sigma=\{ \psi \in \hilb \vert (1-G_\mu) \psi \in D(L^\sigma) \text{ for some } \mu>0 \}$.
\begin{itemize}
\item For any $\eta \in [0,\sigma]$ with $\eta < \frac{2-D}{2 \beta}$ it holds that $\mathfrak{D}^\sigma \subset D(\Omega^\eta) \cap D(L^{q_\eps(\eta)})$ for any $\eps>0$ small enough.
\item
For any $\kappa \in [0,\sigma]$ with $\kappa < \frac{2-D}{4}$ it holds that $\mathfrak{D}^\sigma \subset  D(L^{\kappa})$.
\end{itemize}
 \end{prop}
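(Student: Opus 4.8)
The plan is to use the characterisation $\mathfrak{D}^\sigma = (1-G_\mu)^{-1} D(L^\sigma)$ together with the mapping properties of $G_\mu$ established in Lemma~\ref{lem:GintoL}, Lemma~\ref{lem:GintoOmega} and Corollary~\ref{cor:1minusG}. The key point is that if $\mathcal{S}$ is a Banach space with $D(L^\sigma) \hookrightarrow \mathcal{S} \hookrightarrow \hilb$ on which $1-G_\mu$ is continuously invertible for some admissible $\mu$, then $\mathfrak{D}^\sigma = (1-G_\mu)^{-1}D(L^\sigma) \subset (1-G_\mu)^{-1}\mathcal{S} = \mathcal{S}$. So the task reduces to checking, for the two candidate spaces $\mathcal{S} = D(\Omega^\eta)\cap D(L^{q_\eps(\eta)})$ and $\mathcal{S}=D(L^\kappa)$, that (i) $D(L^\sigma)$ embeds into $\mathcal{S}$ and (ii) the hypotheses of Corollary~\ref{cor:1minusG} are met, in the present restricted regime $D < \beta/2$ (respectively $D<1$ for $\beta=2$).

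First I would handle the second bullet, which is the simpler one. For $\kappa\in[0,\sigma]$ with $\kappa < (2-D)/4$ the embedding $D(L^\sigma)\hookrightarrow D(L^\kappa)$ holds since $\kappa\le\sigma$ and $L_\mu\ge 1$. To invoke Corollary~\ref{cor:1minusG} with this $\kappa$ and $\eta=0$ I need $q_0(0)\le 0$ (true, since $q_{\eps_\ud}(0)=\max(0,\eps_\ud - (\beta+2-2D)/(4\beta))$ and we take $\eps=0$ here) and $\max(\kappa,q_0(0))=\kappa < (2-D)/4$, which is exactly the hypothesis. Hence $1-G_\mu$ is continuously invertible on $D(L^\kappa)$ for large $\mu$, giving $\mathfrak{D}^\sigma\subset D(L^\kappa)$.

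For the first bullet, fix $\eta\in[0,\sigma]$ with $\eta < (2-D)/(2\beta)$. The main thing to verify is that this single inequality, in the regime $D<\beta/2$, is enough to satisfy the full condition~\eqref{eq:completacond} of Corollary~\ref{cor:1minusG}: in the case $D \le (3\beta-2)/\beta - \eps$ the required bound is literally $\eta < (2-D)/(2\beta)$, while in the case $D > (3\beta-2)/\beta - \eps$ one must check that the (larger) first branch $\big(\tfrac{\beta(2-D-\eps)}{2-\beta} - D + 1\big)/(2\beta)$ also exceeds $(2-D)/(2\beta)$ for $\eps$ small — this is where the restriction $D<\beta/2$ does the work, and verifying this elementary comparison of the two branches is the one genuinely computational step; it essentially amounts to $\tfrac{\beta(2-D)}{2-\beta} \ge 2-D$, i.e. $\beta \ge 2-\beta$ after cancelling the positive factor $2-D$, hmm — more carefully one needs the full branch to dominate and I would carry out the short algebra there. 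One also needs $\max(\kappa,q_0(\eta)) < (2-D)/4$ with $\kappa$ now taken as $q_\eps(\eta)$ itself; since $q_{\eps_\ud}(\eta) = \max(0, \eta + \eps_\ud - (\beta+2-2D)/(4\beta))$ and $\eta < (2-D)/(2\beta)$ with $D<\beta/2$, a direct estimate shows $q_0(\eta) < (2-D)/4$, so Corollary~\ref{cor:1minusG} applies and $1-G_\mu$ is continuously invertible on $D(\Omega^\eta)\cap D(L^{q_\eps(\eta)})$. Finally the embedding $D(L^\sigma)\hookrightarrow D(\Omega^\eta)\cap D(L^{q_\eps(\eta)})$ holds because $\Omega_\mu \le L_\mu$ gives $D(L^\sigma)\hookrightarrow D(\Omega^\sigma)\hookrightarrow D(\Omega^\eta)$ (using $\eta\le\sigma$), and because $q_\eps(\eta)\le \eta\le\sigma$ so $D(L^\sigma)\hookrightarrow D(L^{q_\eps(\eta)})$. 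Combining, $\mathfrak{D}^\sigma\subset D(\Omega^\eta)\cap D(L^{q_\eps(\eta)})$, which completes the proof.

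The main obstacle I anticipate is purely bookkeeping: making sure the $\eps$ in $q_\eps(\eta)$ and the $\eps$ appearing in the case distinction of~\eqref{eq:completacond} are chosen consistently and small enough simultaneously (Corollary~\ref{cor:1minusG} already allows shrinking $\eps$, so this should go through), and the short branch-comparison inequality in the $D > (3\beta-2)/\beta-\eps$ case; neither is deep, but both require care to state cleanly.
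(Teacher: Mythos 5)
Your overall route is the same as the paper's: identify $\mathfrak{D}^\sigma$ with $(1-G_\mu)^{-1}D(L^\sigma)$, pick a target space $\mathcal{S}$ with $D(L^\sigma)\hookrightarrow\mathcal{S}$ on which Corollary~\ref{cor:1minusG} makes $1-G_\mu$ boundedly invertible, and conclude $\mathfrak{D}^\sigma\subset\mathcal{S}$; the second bullet and the final assembly are fine. The gap sits exactly where you hedged, namely the verification of \eqref{eq:completacond} in the case $D>\frac{3\beta-2}{\beta}-\eps$. You propose to show that the first branch $\bigl(\frac{\beta(2-D-\eps)}{2-\beta}-D+1\bigr)/(2\beta)$ exceeds $\frac{2-D}{2\beta}$. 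It does not: the comparison reduces to $\frac{\beta(2-D-\eps)}{2-\beta}\geq 1$, i.e.\ to $D\leq\frac{3\beta-2}{\beta}-\eps$, which is precisely the negation of the case you are in. The case split in \eqref{eq:completacond} is the crossover point of the two expressions, so in the first case the first branch is \emph{strictly smaller} than $\frac{2-D}{2\beta}$, and the hypothesis $\eta<\frac{2-D}{2\beta}$ alone cannot give \eqref{eq:completacond}. The missing ingredient is the other hypothesis, $\eta\leq\sigma\leq 1$: one first observes that $\frac{3\beta-2}{\beta}<D+\eps<\frac{\beta}{2}+\eps$ forces $\beta<3-\sqrt{5}$ for small $\eps$, and then a short computation using $D<\beta/2$ shows that the first branch is bounded below by $1$ plus a positive multiple of $(1-\beta)^2$ (up to an $\eps$-correction), hence exceeds $1\geq\eta$. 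This is the argument the paper runs, and it is genuinely different from a comparison of the two branches.

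The same oversight recurs in your claim that $\eta<\frac{2-D}{2\beta}$ and $D<\beta/2$ give $q_0(\eta)<\frac{2-D}{4}$ by "a direct estimate". That condition is equivalent to $\eta<\frac{2-D}{4}+\frac{\beta+2-2D}{4\beta}$, and $\frac{2-D}{2\beta}$ lies below this threshold only when $D\leq\frac{3\beta-2}{\beta}$, which fails for small $\beta$: for $\beta=1/2$, $D=0$ the threshold is $7/4$ while your hypothesis only gives $\eta<2$, and indeed $q_0(\eta)>\frac{2-D}{4}$ for $\eta$ close to $2$. Again the rescue is $\eta\leq 1$: one checks that for every $\beta\in(0,2]$ at least one of the two numbers $1$ and $\frac{2-D}{2\beta}$ lies below $\frac{2-D}{4}+\frac{\beta+2-2D}{4\beta}$, and $\eta$ is below both. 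So the architecture of your proof is correct and matches the paper, but both verification steps for the first bullet genuinely require the bound $\eta\leq 1$, and the two elementary inequalities you propose in its place are false as stated.
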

 \begin{proof}
The first task will be to perform the promised simplification of the conditions on $\eta$ in Corollary~\ref{cor:1minusG}.
First, observe that $\eta \leq \sigma$ means of course also $\eta \leq 1$. We will now prove that $\eta \leq 1$ together with $D<\beta/2$ implies that, if $\eps >0$ can be arbitrarily small, then
\begin{align*}
 \eta<   \frac{\frac{\beta(2-D-\eps)}{2-\beta} - D +1}{2\beta} \quad \text{ if } \quad D > \frac{3 \beta -2}{\beta} - \eps  \, .
\end{align*}
To show this, observe that $\frac{3 \beta -2}{\beta}<D+\eps<\beta/2+\eps$ means that $\beta$ has to fulfill the inequality $2 \beta \eps > 6 \beta - \beta^2 -4 $. This can, for $\eps$ small enough, only be satisfied for $\beta < 4/5$. Using again $D<\beta/2$ we bound, possibly making $\eps>0$ smaller,
\begin{align*}
\frac{\frac{\beta(2-D-\eps)}{2-\beta} - D +1}{2\beta} -1  &
%= \frac{\beta(2-D-\eps) - (D -1-2\beta)(2-\beta)}{2\beta (2-\beta)} 
%\\
%&
> \frac{\beta(2-\beta/2-\eps) - (\beta/2 -1+2\beta)(2-\beta)}{2\beta (2-\beta)} 
\\
&
=
\frac{(1-\beta)^2-\beta \eps}{\beta (2-\beta)} > \frac{5^{-2}-(4/5) \eps}{2 (4/5)} >0 \, .
\end{align*}
To sum up, we have shown that if $\eta \leq 1$ then the upper case of \eqref{eq:completacond} is fulfilled. The lower case in this very condition is also satisfied by hypothesis.

Our second step is to show that the assumptions $\eta \leq 1$ and $\eta < \frac{2-D}{2 \beta}$ are such that also $q_0(\eta) < \frac{2-D}{4}$. Note that the latter condition is equivalent to $ \eta < \frac{2-D}{4}+\frac{\beta+2(1-D)}{4 \beta}$. Using $D<\beta/2$ we now bound from below
 \begin{align*}
  \frac{2-D}{4}+\frac{\beta+2(1-D)}{4 \beta}-1 > \frac{4-4 \beta -\beta^2}{8 \beta} \\
 \text{and \ } \ \frac{2-D}{4}+\frac{\beta+2(1-D)}{4 \beta}-\frac{2-D}{2 \beta } > \frac{6 \beta -\beta^2-4}{8 \beta} \, .
 \end{align*}
Observe that for any $\beta$ at least one of these functions is positive. So if either $\eta \leq 1$ or $\eta < \frac{2-D}{2 \beta }$ then also $q_0(\eta) < \frac{2-D}{4}$. The above considerations allow us to apply the Corollary~\ref{cor:1minusG} and proceed with the main part of the proof.
 
For $\eta, \kappa$ fulfilling the hypothesis, we define $S_1 := \Omega^\eta$ and $S_2 := L^{\max(\kappa,q_\eps(\eta))}$ and $\mathcal{S} = (D(S_1)\cap D(S_2), \norm{ \cdot}_{D(S_1)} + \norm{ \cdot}_{D(S_2)})$. Recall that $\mu \geq 1$ implies $D(L^\sigma) \hookrightarrow D(L^{\min(\eta,\kappa)})$. Therefore we may consider the chain of inclusions $D(L) \hookrightarrow D(L^\sigma) \hookrightarrow \mathcal{S}$.  Furthermore $\norm{S_i \psi}_{\hilb} \leq \norm{\psi}_\mathcal{S}$ and denoting $C_\mu := \norm{(1-G_\mu)^{-1} }_{\mathcal{L}(\mathcal{S})}$ we have
 \begin{align}
&\norm{S_i \psi}_{\hilb}  \leq \norm{\psi}_{\mathcal{S}} = \norm{(1-G_\mu)^{-1} (1-G_\mu) \psi}_{\mathcal{S}} 
\leq C_\mu \norm{(1-G_\mu) \psi}_{\mathcal{S}}
 \label{eq:relbound} \\ \nonumber 
&
\leq C_\mu C' \norm{(1-G_\mu) \psi}_{D(L^\sigma)} 
= C_\mu C' \left( \norm{L^\sigma_\mu (1-G_\mu) \psi}_{\hilb} +  \norm{ (1-G_\mu) \psi}_{\hilb} \right) \, .
\end{align}
Inserting $1=(1-G_\mu)^{-*} (1-G_\mu)^*$ yields the desired bound. In order to obtain the first part of the statement, we can set $\kappa=0$. For the second part we choose $\eta=0$ which implies $D(\Omega^\eta) = \hilb$ and $q_\eps(\eta)\leq 0$ for $\eps>0$ small enough. \qed
 \end{proof}
 
 \begin{cor}
 \label{cor:infibounds2}
 Let $ \beta \in (0,2]$, let $0 \leq D < \beta/2$ if $\beta <2$ and $0<D<1$ if $\beta =2$.
 \begin{itemize}
\item For any $\eta \in [0,1)$ with $\eta < \frac{2-D}{2 \beta}$ there exists $\mu_0 \geq 1$ such that for any $\mu > \mu_0$ the operator $\Omega_\mu^\eta$ is infinitesimally bounded with respect to $H_0^\mu$ 
\item
For any $\kappa \geq 0$ with $\kappa < \frac{2-D}{4}$ there exists $\lambda_0 \geq 1$ such that for any $\lambda > \lambda_0$ the operator $L_\lambda^\kappa$ is infinitesimally bounded with respect to $H_0^\lambda$.
\end{itemize}
 \end{cor}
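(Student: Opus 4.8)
The plan is to deduce Corollary~\ref{cor:infibounds2} directly from the relative-boundedness estimate~\eqref{eq:relbound} established in the proof of Proposition~\ref{prop:infibounds}, applied with $\sigma=1$. Recall that for $\psi\in\dom$ one has $(1-G_\mu)\psi\in D(L_\mu)$, and the chain of estimates in~\eqref{eq:relbound} gives, after inserting $1=(1-G_\mu)^{-*}(1-G_\mu)^*$,
\begin{align*}
\norm{S_i\psi}_\hilb \leq C_\mu C'\left(\norm{(1-G_\mu)^{-*}(1-G_\mu)^* L_\mu(1-G_\mu)\psi}_\hilb + \norm{(1-G_\mu)^{-*}(1-G_\mu)^*\psi}_\hilb\right),
\end{align*}
where $S_1=\Omega_\mu^\eta$ (with $\kappa=0$) and $S_2=L_\mu^\kappa$ (with $\eta=0$). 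Since $(1-G_\mu)^* L_\mu (1-G_\mu) = H_0^\mu$ by definition, and $(1-G_\mu)^{-*}$ is bounded on $\hilb$ with norm $\leq C_\mu$ (Corollary~\ref{cor:1minusG} with $\eta=\kappa=0$), this reads $\norm{S_i\psi}_\hilb \leq C_\mu^2 C'\big(\norm{H_0^\mu\psi}_\hilb + \norm{\psi}_\hilb\big)$ on $D(H_0^\mu)=\dom$. Thus each of $\Omega_\mu^\eta$ and $L_\lambda^\kappa$ is $H_0^\mu$-bounded (resp. $H_0^\lambda$-bounded).

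To upgrade this to \emph{infinitesimal} boundedness, I would track the $\mu$-dependence of the constant in front of $\norm{H_0^\mu\psi}_\hilb$. The key point is that in Lemma~\ref{lem:GintoL} and Lemma~\ref{lem:GintoOmega} the norm of $G_\mu$ (between the relevant spaces) tends to zero as $\mu\to\infty$: one has bounds of the type $\norm{G_\mu}\leq c\,\mu^{-\eps/2}$ and the diagonal contribution $\leq C\mu^{-\eps_\ud\beta}(\cdots)$. Hence $C_\mu = \norm{(1-G_\mu)^{-1}}_{\mathcal{L}(\mathcal{S})}\leq (1-\norm{G_\mu})^{-1}\to 1$ as $\mu\to\infty$, and likewise $\norm{(1-G_\mu)^{-*}}_{\mathcal{L}(\hilb)}\to 1$. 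Rather than chasing a small constant in front of $H_0^\mu$ directly, the cleaner route is to observe that the whole point of using a large spectral parameter is that $S_i$ picks up a decaying prefactor: going back into the proof of Lemma~\ref{lem:GintoOmega} (and Lemma~\ref{lem:GintoL}), the estimate is really $\norm{\Omega_\mu^\eta G_\mu\psi}\leq\mu^{-\eps_\uod}\norm{\Omega_\mu^\eta\psi} + C\mu^{-\eps_\ud\beta}(\norm{\Omega_\mu^\eta\psi}+\norm{L_\mu^{q_{\eps_\ud}(\eta)}\psi})$, so the relative bound of $S_i$ with respect to $H_0^\mu$ can be made arbitrarily small by taking $\mu$ large; a Neumann-series bookkeeping turns this into $\norm{S_i\psi}_\hilb\leq a(\mu)\norm{H_0^\mu\psi}_\hilb + b(\mu)\norm{\psi}_\hilb$ with $a(\mu)\to 0$.

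I expect the main obstacle to be precisely this last bookkeeping step: extracting a genuinely vanishing prefactor $a(\mu)$ in front of $\norm{H_0^\mu\psi}_\hilb$, rather than merely a bounded one. The subtlety is that the norm of $\mathcal{S}=D(\Omega_\mu^\eta)\cap D(L_\mu^{\max(\kappa,q_\eps(\eta))})$ and the operator $H_0^\mu$ both depend on $\mu$, so one must be careful that the decaying factors from Lemmas~\ref{lem:GintoL}--\ref{lem:GintoOmega} are not eaten by the growth of $L_\mu^{q_\eps(\eta)}$ or $\Omega_\mu^\eta$ relative to $L_\mu$. The resolution uses $\mu\geq 1$ and the monotonicity $L_\mu^x\leq L_\mu^y$ for $x\leq y$, together with the strict inequalities $\eta<\frac{2-D}{2\beta}$ and $\max(\kappa,q_\eps(\eta))<\frac{2-D}{4}$, which leave room to absorb an $\eps$-loss; one then concludes by the standard fact that an operator which is relatively bounded with arbitrarily small relative bound is infinitesimally bounded. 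For the second bullet the argument is the same with $\eta=0$, so that $D(\Omega^\eta)=\hilb$ and only Lemma~\ref{lem:GintoL} is needed, and one works with $H_0^\lambda$ for $\lambda>\lambda_0$.
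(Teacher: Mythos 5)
There is a genuine gap in the upgrade from relative boundedness to \emph{infinitesimal} boundedness. Your first step (applying \eqref{eq:relbound} with $\sigma=1$ and inserting $1=(1-G_\mu)^{-*}(1-G_\mu)^*$) correctly yields $\norm{S_i\psi}\leq C_\mu^2 C'(\norm{H_0^\mu\psi}+\norm{\psi})$, i.e.\ $H_0^\mu$-boundedness with a \emph{fixed} constant. But infinitesimal boundedness is a statement about one fixed pair of operators: for every $a>0$ there must exist $b_a$ with $\norm{\Omega_\mu^\eta\psi}\leq a\norm{H_0^\mu\psi}+b_a\norm{\psi}$ for that single $\mu$. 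Your proposed mechanism --- letting $\mu\to\infty$ so that $\norm{G_\mu}\to0$ --- changes the reference operator $H_0^\mu$ along with the estimate, so even a bound of the form $\norm{S_i\psi}\leq a(\mu)\norm{H_0^\mu\psi}+b(\mu)\norm{\psi}$ with $a(\mu)\to0$ would not prove the claim for any fixed $\mu>\mu_0$. Moreover, the decaying prefactors $\mu^{-\eps_\uod}$, $\mu^{-\eps_\ud\beta}$ from Lemmas~\ref{lem:GintoL} and~\ref{lem:GintoOmega} multiply $\norm{\Omega_\mu^\eta\psi}$ and $\norm{L_\mu^{q_{\eps_\ud}(\eta)}\psi}$, i.e.\ they control the correction $\Omega_\mu^\eta G_\mu\psi$; they do nothing to shrink the dominant term $\norm{\Omega_\mu^\eta(1-G_\mu)\psi}$ relative to $\norm{L_\mu(1-G_\mu)\psi}$, which is where the smallness must actually be produced.

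The missing idea, and the one the paper uses, is to apply \eqref{eq:relbound} with $\sigma=\eta$ (resp.\ $\sigma=\kappa$) rather than $\sigma=1$, so that the right-hand side contains $\norm{L_\mu^{\eta}(1-G_\mu)\psi}$, and then to use the interpolation (Young) inequality
\begin{align*}
\norm{L_\mu^\eta\phi}\leq \tilde C\left(\eps\norm{L_\mu\phi}+\eps^{-\eta/(1-\eta)}\norm{\phi}\right),\qquad \phi=(1-G_\mu)\psi\in D(L),
\end{align*}
valid for every $\eps>0$ precisely because $\eta<1$ (this is why the hypothesis $\eta\in[0,1)$ appears, and why $\kappa<\frac{2-D}{4}\leq\frac12<1$ suffices in the second bullet). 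Only then does one insert $1=(1-G_\mu)^{-*}(1-G_\mu)^*$ to convert $\norm{L_\mu(1-G_\mu)\psi}$ into $\norm{H_0^\mu\psi}$, obtaining an arbitrarily small coefficient $\eps$ in front of $\norm{H_0^\mu\psi}$ at \emph{fixed} $\mu$. Your choice $\sigma=1$ forecloses this interpolation, because after that step the coefficient of $\norm{L_\mu(1-G_\mu)\psi}$ is already a fixed constant. The role of ``$\mu$ large'' in the corollary is only to guarantee, via Corollary~\ref{cor:1minusG}, that $1-G_\mu$ is invertible on the relevant spaces so that \eqref{eq:relbound} is available; it is not the source of the infinitesimal smallness.
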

 \begin{proof}
Because $\eta<1$, by Young's inequality, we have
\begin{align}
\label{eq:Young}
\norm{L_\mu^\eta \phi}  \leq \tilde C ( \eps \norm{L_\mu \phi} + \eps^{- \eta/(1-\eta)} \norm{ \phi}) \, .
\end{align}  
for any $\eps>0$ and any $\phi \in D(L)$. In~\eqref{eq:relbound} we can set $\sigma=\eta$, and because $\phi = (1-G_\mu) \psi \in (1-G_\mu) \mathfrak{D}^\eta \subset D(L)$, we can use~\eqref{eq:Young} such that
 \begin{align*}
\norm{\Omega^\eta \psi}_{\hilb} & \leq C_\mu C' \tilde C  \left(\eps \norm{L_\mu (1-G_\mu) \psi}_{\hilb} + (1+\eps^{- \eta/(1-\eta)}) \norm{ (1-G_\mu) \psi}_{\hilb} \right) \, .
\end{align*}
Using $1=(1-G_\mu)^{-*} (1-G_\mu)^*$, we prove infintesimal boundedness of $\Omega^\eta$ with respect to $H_0^\mu$ if $\mu$ is large enough. The case of $L^\kappa$ can be proved in exactly the same way.
 \qed
 \end{proof}
% 
%\begin{rem}
%In the Nelson model, both in the massive and in the massless case, we have $\beta=1$ and $D=0$ which is clearly an admissible pair $(\beta ,D)$ also in Condition~\ref{maincond}. This yields for $\mathfrak{D}=D(H)$ the conditions $\kappa < 1/2$ and $\eta <1$. Thus we see that $D(H) \subset D(\Omega^\eta)$ for all $\eta <1$ and $D(H) \subset D(L^\kappa)$ for all $\kappa < 1/2$. For the form domain $\mathfrak{D}^{1/2}=D(\abs{H}^{1/2})$ we have the same inclusion with respect to $D(L^\kappa)$ but in addition $D(\abs{H}^{1/2}) \subset D(\Omega^{1/2})$.
%\end{rem} 

Now we are well prepared to prove Proposition~\ref{prop:reverse}.

\begin{proof}[\textbf{Proof of Proposition~\ref{prop:reverse}}]{~}
One of the implications is provided by Proposition \ref{prop:infibounds}. It remains to prove that $0 \neq \psi \in \mathfrak{D}^\sigma$ implies that $\norm{L^\kappa \psi}$ or $\norm{\Omega^\eta \psi}$ are infinite if $\kappa \geq \frac{2-D}{4}$ or $\eta \geq \frac{2-D}{2 \beta}$, respectively. For later use we write $\Xi_\mu$ to denote either $L_\mu$ or $\Omega_\mu$. Decomposing $\Xi^\kappa \psi =\Xi^\kappa_\mu (1-G_\mu) \psi +\Xi^\kappa_\mu G_\mu \psi$ we see that, because in any case $\kappa,\eta\leq \sigma$, the norm of the first term is always finite. Recall that we have $\mu \geq 1$. Choose $n \in \N$ such that $\psi \uppar n \neq 0$. For any $r >0$ we define the set
\begin{align*}
U_r := \{(p,K) \in \R^{3+3(n+1)} \vert \abs{p}<r, \ \abs{k_j}< r \text { for all }  2 \leq j \leq {n+1} \} \, .
\end{align*}
We will now show that we can choose $r>0$ such that $\norm{\Xi^\kappa_\mu G_\mu \psi\uppar n}_{\Lz(U_r)}^2$ is infinite. To do so we will split the sum that consitutes $G_\mu$ and apply the inequality $\frac{t-1}{t} a^2 -(t-1) b^2 \leq \abs{a+b}^2$ for $t = 2$. In addition we use that $(\sum_{j=1}^n a_j)^2 \leq n \sum_{j=1}^n a_j^2$. Taken together, this leads to the lower bound
\begin{align}
 \abs{L_\mu G_\mu \psi \uppar n}^2 \geq 
\begin{aligned}[t]
& \frac{\abs{ v(k_1)}^2 \Xi_\mu(p,K)^{2 \kappa}\abs{\psi(p+k_{1},\hat{K}_{1})}^2}{2 (n+1) L_\mu(p,K)^2} 
\\
&-  \sum_{j=2}^{n+1} \frac{\abs{ v(k_j)}^2 \Xi_\mu(p,K)^{2 \kappa} \abs{\psi(p+k_{j},\hat{K}_{j})}^2}{L_\mu(p,K)^2}  \, . \end{aligned} \label{eq:frombelow}
\end{align}
We procced by showing that the integral over $U_r$ of the $n$ lower terms in \eqref{eq:frombelow}, all coming with a minus, is finite, but the integral of the first term is not. We enlarge the domain of integration to all $p \in \R^3$ and perform a change of variables in $p \rightarrow p +k_j$ to obtain an upper bound for the integral over one of these terms:
\begin{align*}
\int_{U_r}&  \frac{\abs{ v(k_j)}^2 \Xi_\mu(p,K)^{2 \kappa} \abs{\psi(p+k_{j},\hat{K}_{j})}^2}{L_\mu(p,K)^2} \ud p \ud K 
\\
&
\leq \int_{\R \times B_r^{n}} \abs{\psi(p,\hat{K}_{j})}^2 \int_{\abs{k_j} < r} \frac{\Xi_\mu(p-k_j,K)^{2 \kappa} }{L_\mu(p-k_j,K)^2 \abs{k_j}^{2 \alpha}} \ud p \ud K \, .
\end{align*}
Here $B_r$ denotes the ball of radius $r$ in $\R^3$. Specifying to $\Xi_\mu = L_\mu$, we can bound the $k_j$-integral, using the fact that $\kappa<1$ and $\mu \geq 1$, by $\int_{\abs{k_j} < r} \abs{k_j}^{-2 \alpha} \ud k_j$. This is clearly finite since $\alpha < d/2$ by hypothesis. For $\Xi_\mu = \Omega_\mu$ and $\kappa \rightarrow \eta$ we bound $\Omega_\mu(K)^{2 \eta} L_\mu(p-k_j,K)^{-2} \leq \Omega_\mu(K)^{2 (\eta-1)} \leq 1$ and conclude in the same way.

To bound the integral over the first term in \eqref{eq:frombelow} from below, we use the assumption $\abs{v(k)} \geq c (1+\abs{k}^\alpha)^{-1}$ and the fact that $ \omega(k) \leq \abs{k}^{\beta}+ \mathsf m$ implies $\Omega(\hat{K}_1) \leq C$ for some constant on $U_r$:
\begin{align}
&\int_{U_r} \frac{\abs{ v(k_1)}^2 \Xi_\mu(p,K)^{2 \kappa}\abs{\psi(p+k_{1},\hat{K}_{1})}^2}{ L_\mu(p,K)^2}  \ud p \ud K 
\nonumber \\
&
\geq c \int_{B_r \times B_r^{n}} \abs{\psi(p,\hat{K}_{1})}^2 \int_{\R^3} \frac{  \Xi_\mu(p-k_1,K)^{2 \kappa}}{(1+\abs{k_1}^\alpha)^{2} ((p-k_1)^2+ k_1^\beta+ \mathsf{m}+ C)^2}  \ud p \ud K \, . \label{eq:integralbr}
\end{align}
When $\Xi_\mu = L_\mu$, we bound the integral over $k_1$ from below by
\begin{align*}
\int_{\R^3} \frac{(p-k_1)^{4 \kappa}}{(1+\abs{k_1}^\alpha)^{2} ((p-k_1)^2+ k_1^\beta+ \mathsf{m}+ C)^2}  \ud k_1 \, ,
\end{align*}
which does not converge for any fixed $p \in \R^3$ and $\kappa \geq (2-D)/4$. The same is true if $\Xi_\mu = \Omega_\mu \geq \abs k^\beta $ and $\eta \geq (2-D)/(2 \beta)$. Because $\psi \uppar n \neq 0$, we can choose $r>0$ such that the integral \eqref{eq:integralbr} is infinite. This proves the claim. \qed
\end{proof}

  \subsection{Self-Adjointness: Proof of Theorem~\ref{mainthm}}
 \label{subsectproofsmain}
 \setcounter{equation}{0}
At first we have to make sure that the construction sketched in the introduction is in fact possible in our case. We start by observing that the lower bound $c(1+\abs{k}^\alpha)^{-1} \leq \abs{v(k)}$ and the restriction $\alpha <3/2$ implies that $v \notin \Lz$. Thus, by ~\cite[Lem.~2.2]{nelsontype}, $\ker a(V)$ is dense in $\hilb$ and the adjoint of $L_\mu \big\vert_{\ker a(V)}=L_{\mu,0}$ is well defined. Using the fact that $G_\mu$ maps into $\ker L_{\mu,0}^*$, we arrive at the representation \eqref{eq:symmetricformofH}, which we repeat for the convenience of the reader:
\begin{align*}
H =  (1-G_\mu)^* L_\mu (1-G_\mu) + T^\mu - \mu \, .
\end{align*}

As has been discussed already in the introduction, it is necessary to prove infinitesimal boundedness of $T^\mu$ with respect to the self-adjoint operator $H_0^\mu$ (see Corollary~\ref{cor:Hnull}) for some $\mu \geq 1$. Then we can conclude with Kato-Rellich.  We will not aim at proving new results about $D(T)$ but instead recall that $u(s) := \frac{\beta }{2} s - \frac{D}{2}$ and cite the existing ones.
 
\begin{lem}[\textbf{Lemma~3.6 of~\cite{nelsontype} }]
\label{lem:DposTdiag}
Assume $D\geq 0$. Then for any $\eps>0$ the expression $T^\mu_\ud$ given by \eqref{eq:Tddef} defines a symmetric operator on the domain $D(L^{\max(\eps,D/2)})$ for any $\mu \geq 1$. 
\end{lem}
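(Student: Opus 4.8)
The plan is to read off from \eqref{eq:Tddef}--\eqref{eq:integraldef} that $T^\mu_\ud$ acts on the $n$-sector simply as multiplication by the function $-I_\mu(p,K)$. Since $|v|^2$ and both denominators in \eqref{eq:integraldef} are real and nonnegative, $I_\mu$ is real-valued wherever the defining integral converges absolutely, so $T^\mu_\ud$ is automatically symmetric on whatever subspace of $\hilb$ it is defined on. Everything therefore reduces to the inclusion $D(L^{\max(\eps,D/2)})\subset D(T^\mu_\ud)$, and for that it suffices to prove the pointwise estimate
\[
|I_\mu(p,K)|\ \leq\ C_\eps\bigl(1+L_\mu(p,K)^{\max(\eps,D/2)}\bigr)
\]
for a.e. $(p,K)$, with $C_\eps$ independent of $n$ and of $\mu\geq 1$: then $\norm{T^\mu_\ud\psi}^2=\sum_n\int|I_\mu|^2|\psi\uppar n|^2\leq C'_\eps(\norm{\psi}^2+\norm{L_\mu^{\max(\eps,D/2)}\psi}^2)$, which is finite precisely for $\psi\in D(L^{\max(\eps,D/2)})$.

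To obtain the estimate I would put the integrand of $I_\mu$ over a common denominator. The point of the subtracted term in \eqref{eq:integraldef} is that $L_\mu(p,k_1,\dots,k_{n+1})$, viewed as a function of $k_{n+1}$, agrees with $k_{n+1}^2+\omega(k_{n+1})$ to leading order as $|k_{n+1}|\to\infty$, the difference being of affine order in $k_{n+1}$ with coefficients controlled by $|p|\leq L_\mu(p,K)^{1/2}$ and by $L_\mu(p,K)$, while for bounded $|k_{n+1}|$ the denominator is comparable to $L_\mu(p,K)$. I would split the $k_{n+1}$-integral at $|k_{n+1}|\simeq L_\mu(p,K)^{1/2}$. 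On the region $|k_{n+1}|\gtrsim L_\mu(p,K)^{1/2}$ one may \emph{not} bound the difference by the sum of its two terms -- each of those is a UV-divergent integral once $D\geq 0$ -- so one has to keep the cancellation: expanding $1/L_\mu(p,\cdot,k_{n+1})$ around $1/(k_{n+1}^2+\omega(k_{n+1}))$, using $|v(k_{n+1})|\leq|k_{n+1}|^{-\alpha}$, $\omega(k_{n+1})\geq|k_{n+1}|^\beta$ and $\beta\leq 2$, and using the boundedness of $|v(k_{n+1})|^2|k_{n+1}|^{2\alpha}$ to handle the angular integral of the linear-in-$k_{n+1}$ term, one is left with an integrable tail of size $\lesssim L_\mu(p,K)^{D/2}$; this is where $D=1-2\alpha$ enters and fixes the exponent.

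On the complementary ball $|k_{n+1}|\lesssim L_\mu(p,K)^{1/2}$ the triangle inequality \emph{is} admissible and one estimates the two terms separately. The piece $|k_{n+1}|\leq 1$ is $O(1)$ by $\alpha<3/2$ and $\mu\geq 1$. For $1\leq|k_{n+1}|\lesssim L_\mu(p,K)^{1/2}$, the counterterm $|v(k_{n+1})|^2/(k_{n+1}^2+\omega(k_{n+1}))\leq|k_{n+1}|^{-2\alpha-2}$ integrates to $\lesssim L_\mu(p,K)^{D/2}$ when $D>0$ and to $\lesssim\log L_\mu(p,K)$, hence $\leq C_\eps L_\mu(p,K)^\eps$, when $D=0$; the first term is treated by the rescaling $k_{n+1}=|p|\,w$, which -- using $\beta\leq 2$ and $|p|\leq L_\mu(p,K)^{1/2}$ -- turns it into an $n$- and $\mu$-independent convergent integral times $|p|^D\leq L_\mu(p,K)^{D/2}$, with the same logarithmic loss at $D=0$. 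Adding these to the ultraviolet bound from the previous step yields the displayed estimate, and the restriction to three dimensions enters only through the radial measure $r^2\,\ud r$.

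The step I expect to be delicate is precisely the intermediate range $1\lesssim|k_{n+1}|\lesssim L_\mu(p,K)^{1/2}$ and its matching to the ultraviolet region: a crude triangle-inequality bound there reinstates the ultraviolet divergence that the subtraction in \eqref{eq:integraldef} was designed to kill, whereas bounding the affine numerator of the difference by its size alone overcounts and produces a power of $L_\mu(p,K)$ close to $1$ instead of $D/2$. One has to keep the $(p-k_{n+1})^2$-contribution in the denominator throughout, cut at the correct $|p|$-dependent scale, and -- in the borderline case $D=0$, which contains the Nelson model $(\beta,D)=(1,0)$ -- let the $\eps$ in $L^{\max(\eps,D/2)}$ soak up a logarithm.
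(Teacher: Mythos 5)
This lemma is not proved in the present paper at all: it is imported verbatim from \cite[Lem.~3.6]{nelsontype}, and your argument is in substance the same as the proof given there — read $T^\mu_\ud$ as multiplication by the real function $-I_\mu$, so symmetry is automatic, and establish the pointwise bound $\abs{I_\mu}\lesssim 1+L_\mu^{\max(\eps,D/2)}$ by putting the two fractions over a common denominator so that the numerator $2p\cdot k_{n+1}-p^2-\Omega_\mu(K)$ is only affine in $k_{n+1}$, then scaling out $\abs{p}\leq L_\mu^{1/2}$ and letting $\eps$ absorb the logarithm at $D=0$. Your identification of the delicate point is also accurate; the only cosmetic remark is that after cutting at $\abs{k_{n+1}}\simeq L_\mu^{1/2}$ the crude bound $\abs{2p\cdot k_{n+1}}\leq 2\abs{p}\abs{k_{n+1}}$ already gives a convergent tail of size $L_\mu^{D/2}$ for every $D<1$ (which covers all pairs admitted by Condition~\ref{maincond}), so no angular cancellation is actually needed there.
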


 \begin{lem}[\textbf{Lemma~3.8 of~\cite{pseudorel18} }]
\label{lem:T2mainlem}
Assume $D \geq 0$. Then, for all $s > 0$ such that $u(s) < 1$ and $0 < u(u(s))$, the operator $T^\mu_{\uod}$, defined in \eqref{eq:Toddef}, is bounded from $ D(N^{\max(0,1-s)} \Omega^{s-u(s)})$ to $\hilb$ and is symmetric on this domain for any $\mu \geq 1$.
\end{lem}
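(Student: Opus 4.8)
The plan is to reduce the claim to a single sector. Since $T^\mu_{\uod}$ leaves the $n$-boson sector invariant, $(T^\mu_{\uod}\psi)\uppar n = T^\mu_{\uod}(\psi\uppar n)$, and both $\Omega$ and $N$ act sector-wise, it suffices to prove a bound $\norm{T^\mu_{\uod}\psi\uppar n}_\hilb \leq C\, n^{\max(0,1-s)}\,\norm{\Omega^{s-u(s)}\psi\uppar n}_\hilb$ with $C$ independent of $n$, and then sum the squares over $n$. Note that $s-u(s)=((2-\beta)s+D)/2\geq 0$ because $\beta\leq 2$, so $\Omega^{s-u(s)}$ is a genuine positive power. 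I would start from the explicit kernel, i.e.\ the off-diagonal (second) line of \eqref{eq:Toddef}: $T^\mu_{\uod}\psi\uppar n(p,K)$ is a sum over $j=1,\dots,n$ of integrals over $k_{n+1}$ of $\overline{v(k_{n+1})}\,v(k_j)\,L_\mu(p,k_1,\dots,k_{n+1})^{-1}\,\psi\uppar n(p+k_j-k_{n+1},\hat K_j,k_{n+1})$; in the $j$-th term the boson $k_j$ is annihilated and $k_{n+1}$ is created.

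The heart of the proof will be a Schur-type, weighted Cauchy--Schwarz estimate. For fixed $(p,K)$ and positive weights $W_j(p,K,k_{n+1})$ I would bound $\abs{T^\mu_{\uod}\psi\uppar n(p,K)}^2$ by the product of $\sum_j\int W_j\,\ud k_{n+1}$ and $\sum_j\int \abs{v(k_j)}^2\abs{v(k_{n+1})}^2 L_\mu(\cdots)^{-2}W_j^{-1}\,\abs{\psi\uppar n(p+k_j-k_{n+1},\hat K_j,k_{n+1})}^2\,\ud k_{n+1}$, choosing $W_j$ to consist of $\abs{v(k_{n+1})}^2$ (and, if needed, part of the resolvent) times suitable powers of $\omega(k_j)$ and $\omega(k_{n+1})$. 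For the first factor I would use $\abs{v(k)}\leq\abs{k}^{-\alpha}$, $\omega(k)\geq\abs{k}^\beta$ and $L_\mu(\cdots)\geq\omega(k_{n+1})$; the $k_{n+1}$-integral is then a pure power-counting integral, and I would tune the $\omega(k_{n+1})$-power in $W_j$ so that its convergence at large momenta becomes exactly the hypothesis $u(s)<1$, while convergence at the origin holds automatically because $\alpha<3/2$ --- this is precisely the point at which one sees that no infrared cutoff is needed. The subsequent $j$-summation I would control by distributing the available powers of $\omega(k_j)$ between $W_j$ and the $\psi$-slot: Jensen's inequality then costs the factor $n^{2\max(0,1-s)}$, with no loss when $s\geq 1$ because in that case a full power of $\omega(k_j)$ can be absorbed into the denominator.

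For the second Cauchy--Schwarz factor I would integrate over $(p,K)$ and, in the $j$-th summand, perform the translation of the particle momentum that places $\psi\uppar n$ at its natural argument $(p,\hat K_j,k_{n+1})$; after relabelling, $k_j$ becomes the only variable not appearing inside $\psi\uppar n$. The remaining $k_j$-integral has to be finite and dominated by a constant times $(\Omega(\hat K_j)+\omega(k_{n+1}))^{2(s-u(s))}$; this is again power-counting, and I expect its convergence at large momenta to be equivalent to the second hypothesis $0<u(u(s))$, the self-composition $u\circ u$ arising precisely because the total weight is being split as $s=u(s)+(s-u(s))$ between the resolvent and the final $\psi$-weight. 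Bounding the weight of a single boson slot by $\Omega^{2(s-u(s))}$ (using $s-u(s)\geq 0$) and recombining the two factors then closes the sector-wise estimate.

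For the symmetry assertion I would argue either that $T^\mu=-a(V)L_\mu^{-1}a^*(V)$ is manifestly symmetric while $T^\mu_\ud$ is multiplication by the real function $-I_\mu$, so that $T^\mu_{\uod}=T^\mu-T^\mu_\ud$ is symmetric, or directly from the kernel: on $D(N^{\max(0,1-s)}\Omega^{s-u(s)})$ the double integral defining $\langle\phi,T^\mu_{\uod}\psi\rangle$ is absolutely convergent by the bound just proved, so Fubini applies, and the change of variables exchanging $k_{n+1}$ with the $j$-th boson slot turns it into $\langle T^\mu_{\uod}\phi,\psi\rangle$. The hard part will be the bookkeeping in the middle two steps: there is a fixed supply $\omega^{s}$ of weight that must be apportioned among the resolvent, the gain in the $j$-summation, and the final $\psi$-weight so that $u(s)<1$, $0<u(u(s))$ and the \emph{optimal} power $\max(0,1-s)$ of $N$ all come out simultaneously and uniformly in $n$; squeezing out the optimal rather than merely some power of $N$ is the genuinely delicate point, and is what makes the statement sharp enough to give, for instance, $D(\Omega^{1/2})\subset D(T_{\uod})$ for the Nelson model.
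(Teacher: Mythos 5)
First, a point of comparison: the paper does not actually prove this lemma --- it imports it verbatim as Lemma~3.8 of \cite{pseudorel18} (which in turn refines Prop.~3.5 of \cite{nelsontype}), so the only ``proof'' in the text is a citation. Your plan is nonetheless the same strategy used in that source: a sector-wise weighted Cauchy--Schwarz estimate in $k_{n+1}$ and in $j$, a translation of the particle momentum placing $\psi\uppar n$ at its natural argument, power counting for the remaining single-variable integrals, and the H\"older/Jensen step $\sum_j\omega(k_j)^s\leq n^{\max(0,1-s)}\Omega(K)^{s}$ for $s<1$ versus $\sum_j\omega(k_j)^s\leq\Omega(K)^s$ for $s\geq1$, which is indeed exactly where the optimal power $N^{\max(0,1-s)}$ comes from. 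The kernel-plus-Fubini argument for symmetry is also the right one (the alternative via $-a(V)L_\mu^{-1}a^*(V)$ is only formal here, since $v\notin\Lz$ means $a^*(V)$ is not an operator on $\hilb$).

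The genuine gap is in the step you call ``pure power-counting.'' If you bound the first Schur factor using only $\abs{v(k_{n+1})}^2\leq\abs{k_{n+1}}^{-2\alpha}$, $\omega(k_{n+1})\geq\abs{k_{n+1}}^{\beta}$ and $L_\mu\geq\omega(k_{n+1})$, the integrand becomes a pure power of $\abs{k_{n+1}}$, and no pure power is integrable over all of $\R^3$: the conditions for convergence at the origin and at infinity are mutually exclusive. The estimate only closes if one retains the $\Omega_\mu(\hat K_j)$ (and/or $\abs{p-\textstyle\sum_ik_i}^2$) contribution of the resolvent and rescales $k_{n+1}$ by $\Omega_\mu(\hat K_j)^{1/\beta}$, exactly as in the proofs of Lemmas~\ref{lem:GintoL} and~\ref{lem:GintoOmega}; the two integrability conditions of the rescaled integral at $0$ and at $\infty$, combined with the requirement that the resulting power of $\Omega_\mu(\hat K_j)$ be dominated by the weight $\Omega^{2(s-u(s))}$ carried by $\psi$, are precisely what produce $u(s)<1$ and $u(u(s))>0$. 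You acknowledge that ``part of the resolvent'' may be needed and correctly guess the role of $u\circ u$, but the scaling computation --- which is the entire content of the lemma, since the point is that these two inequalities are exactly the sharp ones --- is never carried out. As it stands the proposal is a correct blueprint of the argument in \cite{pseudorel18} rather than a proof.
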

In order to apply the result of Lemma~\ref{lem:T2mainlem}, we clearly have to restrict to $s \geq 1$ as usual. 

\begin{proof}[\textbf{Proof of Theorem~\ref{mainthm}}]
Decompose into diagonal and off-diagonal terms $T^\mu  = T^\mu_{\ud} + T^\mu_{\uod}$. Due to Lemma~\ref{lem:DposTdiag}, we have a bound $\norm{T^\mu_{\ud} \psi} \leq \norm{L_\mu^{\max(\eps,D/2)} \psi}$. As long as $\mu$ is greater than some $\mu_0$ and $D < 2/3$, the second part of Corollary~\ref{cor:infibounds2} implies that the diagonal part of the operator is infinitesimally bounded by $H^\mu_0$. To proceed analogously for the off-diagonal part we need that for $s \geq 1$ Lemma~\ref{lem:T2mainlem} is applicable, so necessarily
\begin{align}
  \label{eq:condi4}
& u(s)<1
\\
  \label{eq:condi3}
& u(u(s))>0
 \, .
 \end{align}
In this way we can bound the norm of $T_\uod^\mu \psi$ by the norm of $\Omega_\mu^{s-u(s)} \psi$. Then we would like to conclude the infinitesimal boundedness by setting $\eta=s-u(s)$ in Corollary~\ref{cor:infibounds2}. To do so, we have to make sure that
 \begin{align} 
 \label{eq:condi11}
& s-u(s) < 1 , \\
 \label{eq:condi12}
& s-u(s) < \frac{2-D}{2 \beta} 
%  \label{eq:condi2}
%& \mathfrak{L}_0(s-u(s)) < \frac{2-D}{4} 
\end{align}
These four condition can be converted into bounds on $D$ that depend on $\beta$ and $s$:
  \begin{align*} 
  D> \beta s -2  &=: f_{\ref*{eq:condi4}}(s)
  \\
  D < s \frac{\beta^2}{\beta + 2} &=: f_{\ref*{eq:condi3}}(s)
 \\
 D < 2- s( 2-\beta)&=:f_{\ref*{eq:condi11}}(s)  \\
 D < \frac{2 - s \beta(2  - \beta ) }{\beta+1}&=:f_{\ref*{eq:condi12}}(s) \\
% D < 1-  \frac{2 s \beta (2  - \beta ) }{3 \beta+2}&=:f_{\ref*{eq:condi2}}(s)
 \end{align*}
 If $\beta=2$, we choose $s=1$ and $D \in (0,2/3)$ to satisfy all four conditions. For $\beta \in (0,2)$, we assume $D \geq 0$ and set
 \begin{align}
 \label{eq:defofF}
 F:=\min_{i =\ref*{eq:condi11},\ref*{eq:condi12},\ref*{eq:condi3}} f_i: [1,2/\beta) \rightarrow \R \, .
 \end{align}
On this interval $[1,2/\beta)$, the Condition~\eqref{eq:condi4} is always satisfied and the Lemma~\ref{lem:minF} below completes the proof of Theorem~\ref{mainthm} because it confirms the upper bound on $D$. \qed
 \end{proof}

\begin{lem}
\label{lem:minF}
Let $F$ be as defined in~\eqref{eq:defofF}. For $\beta \in (0,2)$ it holds that 
\begin{align*}
\max_{s \in [1,2/\beta)} F(s) =  \begin{cases} \frac{\beta^2}{2} & \beta \in (0, 2(\sqrt{2}-1))  \\
\frac{2 \beta}{\beta+4}  & \beta \in [2(\sqrt{2}-1),\sqrt{5}-1) \\ \frac{\beta^2 -2 \beta +2   }{\beta+1} & \beta \in [\sqrt{5}-1,2) \, . \end{cases}
\end{align*}
\end{lem}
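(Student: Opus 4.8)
The plan is to exploit that, for fixed $\beta\in(0,2)$ and viewed as functions of $s$, the three functions entering $F$ are monotone: $f_{\ref*{eq:condi3}}(s)=\beta^2 s/(\beta+2)$ is strictly increasing, while $f_{\ref*{eq:condi11}}(s)=2-s(2-\beta)$ and $f_{\ref*{eq:condi12}}(s)=(2-s\beta(2-\beta))/(\beta+1)$ are strictly decreasing. A one-line computation gives $f_{\ref*{eq:condi11}}(s)-f_{\ref*{eq:condi12}}(s)=(2\beta-s(2-\beta))/(\beta+1)$, so the decreasing lower envelope $G:=\min(f_{\ref*{eq:condi11}},f_{\ref*{eq:condi12}})$ equals $f_{\ref*{eq:condi12}}$ on $[1,2\beta/(2-\beta)]$ and $f_{\ref*{eq:condi11}}$ on $[2\beta/(2-\beta),\infty)$ (the first piece being empty when $\beta<2/3$). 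Since $F=\min(f_{\ref*{eq:condi3}},G)$ is the minimum of an increasing and a decreasing function, $\psi:=f_{\ref*{eq:condi3}}-G$ is strictly increasing, $F$ coincides with $f_{\ref*{eq:condi3}}$ where $\psi\le 0$ and with $G$ where $\psi\ge 0$; hence $F$ is unimodal on $[1,2/\beta)$ and its maximum is attained either at $s=1$ (if $\psi(1)\ge 0$) or at the unique zero $s^\star$ of $\psi$, provided $s^\star<2/\beta$.

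Next I would evaluate at $s=1$: $f_{\ref*{eq:condi3}}(1)=\beta^2/(\beta+2)$, $f_{\ref*{eq:condi11}}(1)=\beta$, $f_{\ref*{eq:condi12}}(1)=(\beta^2-2\beta+2)/(\beta+1)$. One checks $f_{\ref*{eq:condi3}}(1)<\beta$ always, $f_{\ref*{eq:condi12}}(1)\le\beta$ iff $\beta\ge 2/3$, and $f_{\ref*{eq:condi3}}(1)<f_{\ref*{eq:condi12}}(1)$ iff $\beta^2+2\beta-4<0$, i.e. iff $\beta<\sqrt5-1$. Consequently, for $\beta\in[\sqrt5-1,2)$ we have $\psi(1)\ge 0$ and $\beta>2/3$, so $F$ is decreasing on the whole interval and $\max F=F(1)=G(1)=f_{\ref*{eq:condi12}}(1)=(\beta^2-2\beta+2)/(\beta+1)$, which is the third line of the claim. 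For $\beta<\sqrt5-1$ we have $\psi(1)<0$ and it remains only to locate $s^\star$ and to check $s^\star\in(1,2/\beta)$.

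For that I would compute the two candidate crossings: $f_{\ref*{eq:condi3}}=f_{\ref*{eq:condi11}}$ at $s_{11}=(\beta+2)/2$ with common value $\beta^2/2$, and $f_{\ref*{eq:condi3}}=f_{\ref*{eq:condi12}}$ at $s_{12}=2(\beta+2)/(\beta(\beta+4))$ with common value $2\beta/(\beta+4)$. The zero $s^\star$ of $\psi$ equals $s_{11}$ exactly when $G(s_{11})=f_{\ref*{eq:condi11}}(s_{11})$, i.e. when $f_{\ref*{eq:condi11}}(s_{11})\le f_{\ref*{eq:condi12}}(s_{11})$; inserting $f_{\ref*{eq:condi12}}(s_{11})=(\beta^3-4\beta+4)/(2(\beta+1))$ this reduces to $\beta^2+4\beta-4\le 0$, i.e. $\beta\le 2(\sqrt2-1)$. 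Symmetrically, $s^\star=s_{12}$ exactly when $f_{\ref*{eq:condi12}}(s_{12})\le f_{\ref*{eq:condi11}}(s_{12})$, i.e. $s_{12}\le 2\beta/(2-\beta)$, which rearranges to $\beta^3+5\beta^2-4\ge 0$; since $\beta\mapsto\beta^3+5\beta^2-4$ is strictly increasing on $(0,\infty)$ and vanishes at $\beta=2(\sqrt2-1)$ (by direct substitution, using $(2\sqrt2-2)^2=12-8\sqrt2$), this holds iff $\beta\ge 2(\sqrt2-1)$. For the domain constraint, $s_{11}=(\beta+2)/2>1$ always and $s_{11}<2/\beta$ iff $\beta(\beta+2)<4$ iff $\beta<\sqrt5-1$; likewise $s_{12}>1$ iff $\beta^2+2\beta-4<0$, i.e. $\beta<\sqrt5-1$, and $s_{12}<2/\beta$ iff $\beta+2<\beta+4$, always true. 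Assembling this gives $\max F=f_{\ref*{eq:condi3}}(s_{11})=\beta^2/2$ for $\beta\in(0,2(\sqrt2-1))$ and $\max F=f_{\ref*{eq:condi3}}(s_{12})=2\beta/(\beta+4)$ for $\beta\in[2(\sqrt2-1),\sqrt5-1)$, the first two lines of the claim; the three formulas match at the junctions, since at $\beta=2(\sqrt2-1)$ one has $s_{11}=s_{12}=\sqrt2$ and $\beta^2/2=2\beta/(\beta+4)=6-4\sqrt2$, while at $\beta=\sqrt5-1$ one has $s_{12}=1$ and $2\beta/(\beta+4)=(\beta^2-2\beta+2)/(\beta+1)=2\sqrt5-4$.

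I expect the only genuine difficulty to be the bookkeeping of which of $f_{\ref*{eq:condi11}},f_{\ref*{eq:condi12}}$ is the active lower envelope $G$ at the crossing point: this is precisely what produces the threshold $\beta=2(\sqrt2-1)$ and forces the sign analysis of the cubic $\beta^3+5\beta^2-4$ (together with the quadratic $\beta^2+2\beta-4$ whose root $\sqrt5-1$ governs the passage to the last regime). Once that is organised, the remaining steps are elementary manipulations of affine and rational functions of $s$.
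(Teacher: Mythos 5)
Your argument is correct and follows essentially the same route as the paper: exploit that $f_{\ref*{eq:condi3}}$ is increasing while $f_{\ref*{eq:condi11}},f_{\ref*{eq:condi12}}$ are decreasing, locate the first crossing of the increasing line with the decreasing lower envelope, and split the $\beta$-range according to which branch is active there. The only cosmetic difference is that you decide the active branch by comparing values at the crossing points (producing the cubic $\beta^3+5\beta^2-4$, which in fact factors as $(\beta+1)(\beta^2+4\beta-4)$), whereas the paper compares $s_{\ref*{eq:condi11}}$ and $s_{\ref*{eq:condi12}}$ directly — both yield the same threshold $2(\sqrt2-1)$.
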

\begin{proof}
Closing the interval at the right endpoint we conclude that the supremum is attained, and we denote the point where this happens by $s_*$.
All functions $f_i$ are affine functions on $[1, 2/\beta]$. The fact that $(2-\beta) \geq 0$ implies that $f_{\ref*{eq:condi11}}$ and $f_{\ref*{eq:condi12}}$ are non-increasing whereas $f_{\ref*{eq:condi3}}$ is clearly increasing. Thus we have
\begin{align}
\label{eq:formulaforF}
F(s)  = \begin{cases} 
f_{\ref*{eq:condi3}}(s) & f_{\ref*{eq:condi3}}(s) < \min(f_{\ref*{eq:condi11}}(s),f_{\ref*{eq:condi12}}(s)) \\ \min(f_{\ref*{eq:condi11}}(s),f_{\ref*{eq:condi12}}(s))  & f_{\ref*{eq:condi3}}(s) \geq \min(f_{\ref*{eq:condi11}}(s),f_{\ref*{eq:condi12}}(s))
\, . \end{cases} 
\end{align}
If $\beta \geq \sqrt{5}-1$ then  $f_{\ref*{eq:condi12}}(1) \leq \min(f_{\ref*{eq:condi11}}(1),f_{\ref*{eq:condi3}}(1))$. This however implies that it holds that $F(s)=\min(f_{\ref*{eq:condi11}}(s),f_{\ref*{eq:condi12}}(s))$ and consequently $s_* = 1$. We can also conclude that 
%
%Thus in this range of $\beta$ we have 
%\begin{align*}
%\min_{i =\ref{eq:condi11},\ref{eq:condi12},\ref{eq:condi3}} f_i(s) = \min_{i =\ref{eq:condi11},\ref{eq:condi12},\ref{eq:condi2}} f_i(s) 
%\end{align*}
%such that
%\begin{align*}
%s_*&:=\left\lbrace s\in[1,2/\beta) \bigg \vert \min_{i =\ref{eq:condi11},\ref{eq:condi12},\ref{eq:condi2},\ref{eq:condi3}} f_i(s)
%= \max_{s\in[1,2/\beta)} \min_{i =\ref{eq:condi11},\ref{eq:condi12},\ref{eq:condi2},\ref{eq:condi3}} f_i(s) \right\rbrace
%\\
%&
%=
%\left\lbrace s\in[1,2/\beta) \bigg \vert \min_{i =\ref{eq:condi11},\ref{eq:condi12},\ref{eq:condi2}} f_i(s)
%= \max_{s\in[1,2/\beta)} \min_{i =\ref{eq:condi11},\ref{eq:condi12},\ref{eq:condi2}} f_i(s) \right\rbrace = 1
%\end{align*}
%and
\begin{align*}
F(s_*) = \min(f_{\ref*{eq:condi11}}(1),f_{\ref*{eq:condi12}}(1))=f_{\ref*{eq:condi12}}(1) = \frac{\beta^2 -2 \beta +2   }{\beta+1}\, .
\end{align*}
Now consider the case where $\beta<\sqrt{5}-1$. Because $f_{\ref*{eq:condi3}}(1) < f_i(1)$ for $\beta \in (0,\sqrt{5}-1)$, we observe that the first case of \eqref{eq:formulaforF} is never empty. Consequently 
\begin{align*}
s_*&:=\left\lbrace s\in[1,2/\beta] \big \vert f_{\ref*{eq:condi3}}(s) = \min(f_{\ref*{eq:condi11}}(s),f_{\ref*{eq:condi12}}(s)) \right\rbrace  = \min_{i=\ref*{eq:condi11},\ref*{eq:condi12}} \left\lbrace s\in[1,2/\beta] \big \vert f_{\ref*{eq:condi3}}(s) = f_i(s) \right\rbrace  
\end{align*}
and of course $F(s_*) = f_{\ref*{eq:condi3}}(s_*)$. We find that 
\begin{align*}
f_{\ref*{eq:condi3}}(s) &= f_{\ref*{eq:condi11}}(s) \ \iff \ s = s_{\ref*{eq:condi11}} := \frac{\beta+2}{2} 
\\
f_{\ref*{eq:condi3}}(s) &= f_{\ref*{eq:condi12}}(s) \ \iff \ s = s_{\ref*{eq:condi12}} := \frac{2(\beta+2)}{\beta(\beta+4)} 
\end{align*}
and, because both $s_{\ref*{eq:condi11}}$ and $s_{\ref*{eq:condi12}}$ lie in the interval $[0,2/\beta]$, that means
\begin{align*}
s_*& = \min(s_{\ref*{eq:condi11}}, s_{\ref*{eq:condi12}}) = \begin{cases} s_{\ref*{eq:condi11}} & \beta \in (0, 2(\sqrt{2}-1)) \\
s_{\ref*{eq:condi12}} & \beta \in [2(\sqrt{2}-1),\sqrt{5}-1) \end{cases} 
\\
&
= \frac{\beta+2}{2} \begin{cases} 1  & \beta \in (0, 2(\sqrt{2}-1)) \\
\frac{4}{\beta(\beta+4)}  & \beta \in [2(\sqrt{2}-1),\sqrt{5}-1) \, . \end{cases} 
\end{align*}
Insert this into $f_{\ref*{eq:condi3}}$ and note that $s_*<2/\beta$. This yields the desired expression for $\max_{s \in [1,2/\beta)} F(s)$. \qed
\end{proof}

\section{Concluding Remarks}
We would like to adress two points that have not been discussed so far. We have not said anything yet about the connection of the IBC approach to renormalisation procedures in the massless case. In~\cite[Prop.~3.4]{pseudorel18}, it is shown that for quite general massive models the cutoff operator plus renormalisation constant $H_\Lambda+E_\Lambda$ converges in norm resolvent sense to the Hamiltonian $H$. In this cutoff operator, the form factor in the formal expression $L+a(V)+a^*(V)$ is replaced by $\chi_\Lambda v$ for the characteristic function $\chi_\Lambda$ of a ball of radius $\Lambda$. 

As we will argue in the following, such a result does also hold in the case of massive or massless models if Condition~\ref{maincond} is satisfied. Denote by $G_{\mu,\Lambda}$ and $T^\mu_\Lambda$ the corresponding operators with $ v $ replaced by $\chi_\Lambda v$. The parameter $\mu\geq 1$ is chosen as large as necessary and fixed. For the proof of norm resolvent convergence, convergence of $G_{\mu,\Lambda}$ in the $\hilb$-norm (to $G_\mu$) is needed. As long as $u(1) \in (0,1)$, this follows in the massless case exactly as in \cite[Prop.~4.4]{pseudorel18} by explicitly computing symmetric decreasing rearrangements. To prove convergence of the STM-operator $T^\mu$, it is convenient to decompose again into diagonal and off-diagonal parts. Using~\cite[Lem.~3.6 and 3.8]{pseudorel18}, we can prove convergence of $T^\mu_{\ud,\Lambda}+E_\Lambda$ on $D(L^\kappa)$ and of $T^\mu_{\uod,\Lambda}$ on $D(\Omega^\eta)$ for some $\kappa,\eta$. It turns out that $\kappa,\eta$ are such that $T^\mu_\Lambda+E_\Lambda \rightarrow T^\mu$ on $D(H)$. This would complete the proof of norm resolvent convergence.

Although the case of a single particle was considered in this contribution in order to keep the notation simple, the case of $M>1$ particles could be included in the analysis as well. This is because when bounding norms of $G_\mu \psi$ from above, the relevant estimates are the same for $M=1$ and $M>1$. For bounds from below, as in Section~\ref{proofofpropreverse}, one has to take care of some more cross-terms because the domain of integration is chosen to be not symmetric under exchange of particles. It is only the $T$-operator where a significant difference occurs. The off-diagonal part of $T$ consists for $M>1$ of additional terms, which are called $\theta$-terms in \cite{nelsontype}. They are however bounded on $D(L^{\max(\eps,D/2)})$ for any $\eps>0$, exactly as the diagonal part of $T$, see \cite[Lem.~3.7]{pseudorel18}. In the context of the above analysis, these $\theta$-terms can therefore be put together with $T^\mu_\ud$ and pose almost no constraints on the allowed pairs $(\beta,D)$.

\newcommand{\etalchar}[1]{$^{#1}$}
\providecommand{\bysame}{\leavevmode\hbox to3em{\hrulefill}\thinspace}
\providecommand{\MR}{\relax\ifhmode\unskip\space\fi MR }
% \MRhref is called by the amsart/book/proc definition of \MR.
\providecommand{\MRhref}[2]{%
  \href{http://www.ams.org/mathscinet-getitem?mr=#1}{#2}
}
\providecommand{\href}[2]{#2}

\end{document}